\newtheorem{theorem}{Theorem}
\newtheorem{lemma}{Lemma}
\newtheorem{corollary}{Corollary}
\newtheorem{definition}{\textbf{Definition}}
\newcommand{\Expt}{\mbox{${\mathbb E}$} }
\begin{document}

\title{On the Storage Cost of Private Information Retrieval}
\author{Chao Tian}
\maketitle

\begin{abstract}
We consider the fundamental tradeoff between the storage cost and the download cost in private information retrieval systems, without any explicit structural restrictions on the storage codes, such as maximum distance separable codes or uncoded storage. Two novel outer bounds are provided, which have the following implications. When the messages are stored without any redundancy across the databases, the optimal PIR strategy is to download all the messages; on the other hand, for PIR capacity-achieving codes, each database can reduce the storage cost, from storing all the messages, by no more than one message on average. We then focus on the two-message two-database case, and show that a stronger outer bound can be derived through a novel pseudo-message technique. This stronger outer bound suggests that a precise characterization of the storage-download tradeoff may require non-Shannon type inequalities, or at least more sophisticated bounding techniques. 
\end{abstract}

\section{Introduction}

Private information retrieval \cite{chor1995private}, or simply referred to as PIR, is a fundamental privacy-preserving information processing primitive. PIR has deep connections to other well-known communication and coding problems such as locally decodable codes \cite{goldreich2002lower,sun2018capacityldc} and interference alignment \cite{sun2016blind}. The PIR capacity, i.e., the inverse of the minimum download cost, was recently characterized by Sun and Jafar \cite{sun2017PIRcapacity}, and the PIR capacities under other variations have also been considered \cite{Tajeddine_Rouayheb,Sun_Jafar_TPIR,sun2018capacitysymmetric,Banawan_Ulukus,banawan2018capacityByzantine,Sun_Jafar_MPIR,Sun_Jafar_MDSTPIR,FREIJ_HOLLANTI,kumar2019achieving,
banawan2018capacity,Banawan_Ulukus_Multimessage,chen2017capacity,Wei_Banawan_Ulukus_Side,tian2018capacity,zhou2019capacity}. 

Since the databases from which the information is retrieved are basically storage nodes, designing efficient storage strategies in PIR systems is of significant importance, and the problem has received considerable attention recently. Some of the existing efforts assume certain specific coding structures in the storage side, such as using maximum distance separable (MDS) codes \cite{banawan2018capacity,Tajeddine_Rouayheb} or in an uncoded form \cite{attia2018capacity}. In two recent works \cite{Tian_Sun_Chen_Storage,Sun_Jafar_MPIR}, the tradeoff was considered without any structural constraints on the storage or retrieval codes  for the special case of two databases and two messages, and it was found that non-linear codes can provide further improvement over linear codes. Other notable efforts can be found in \cite{Rao_Vardy,Chan_Ho_Yamamoto,sun2019breaking,fazeli2015codes,BanawanITW19} and references therein. 

Despite these efforts, our understanding of the fundamental tradeoff between the storage cost $\alpha$ and the download cost $\beta$ is still quite limited, mainly due to the lack of general information theoretic converse results when the storage codes are not required to follow any potentially restricting structural constraint. In this work, we initialize such an effort and derive several information theoretical outer bounds of the fundamental tradeoff between the storage cost and the download cost. Instead of attempting to characterize the complete optimal tradeoff curve, in this work we focus on the two extreme points: the point when the storage cost is minimal, and the point when the download cost is minimal. For the former, the question is when the storage has no redundancy, what the minimum download cost can be; for the latter, the question is for PIR capacity-achieving codes, what the minimum storage cost can be; see Fig. \ref{fig:concept} for an illustration. In other words, in this work, our goal is to identity the two anchor points of the fundamental tradeoff curve, from which the general tradeoff can hopefully be further built through subsequent efforts.

Our main result is a precise characterization of the first extreme point (no redundancy in storage), and an approximate characterization of the second (capacity-achieving PIR codes). More precisely, by providing two novel outer bounds, we show that for the former, the optimal PIR strategy is in fact to download all the messages, whereas for the latter, i.e., for PIR capacity-achieving codes, each database can reduce the storage cost by no more than one message on average, compared to the simple strategy of storing every message. In order to better understand the second extreme point, we then focus on the two-message two-database case, and show that a stronger outer bound can be derived. This stronger bound requires a novel pseudo-message technique, the origin of which can be traced back to Ozarow's bounding technique \cite{Ozarow:80} of extending the original probability space through Markov coupling. This technique has been developed significantly over the years and found many applications, e.g., \cite{wagner2008improved,fu2002rate,tian2009approximating}, and it is also the technique based on which all known non-Shannon type inequalities were derived in the literature \cite{zhang1997non,dougherty2011non,gurpinar2019use}. In the specific context of PIR, however, the value of this refined outer bound is the following: it strongly suggests that a precise characterization of the storage-download tradeoff may require non-Shannon type inequalities, or at least certain more sophisticated bounding techniques.

\begin{figure}[t]
\centering
\includegraphics[width=0.35\textwidth]{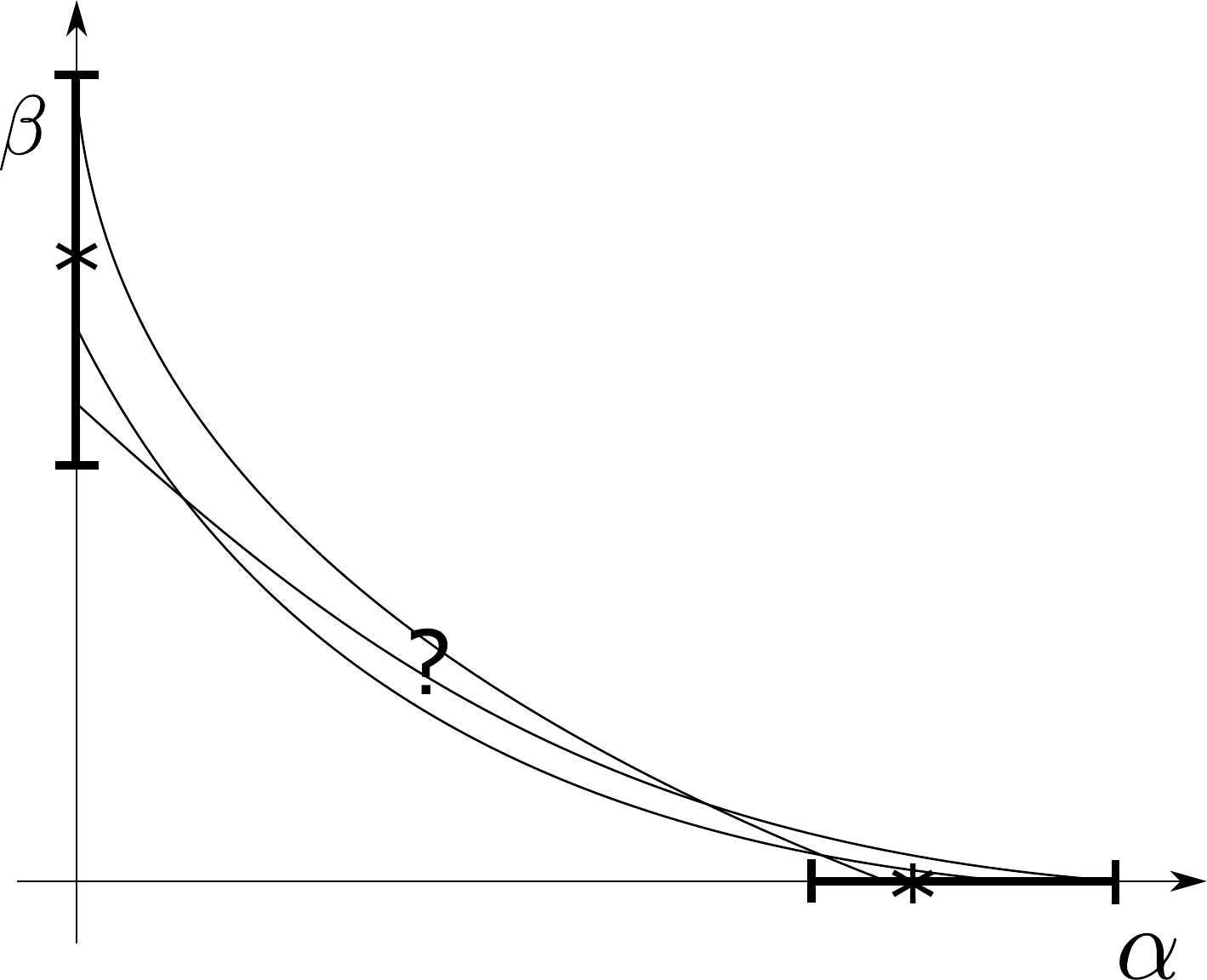}
\caption{Instead of characterizing the optimal tradeoff curve between the storage cost $\alpha$ and download cost $\beta$, we focus on finding (approximate) characterizations of the minimum $\alpha$ on the horizontal axis and the minimum $\beta$ on the vertical axis.\label{fig:concept}}
\end{figure}

\section{Problem Definition}

The problem setting of private information retrieval is well known, see e.g., \cite{sun2017PIRcapacity}, however in order to fix the notation and derive the outer bounds in question, we need to first introduce a rigorous version of the problem definition. We then provide the formal definitions of the operational download cost and storage cost, together with the informational version, in order to unify the different conventions in the literature and avoid confusion in subsequent discussions. 

\subsection{Encoding and Decoding Functions}
There are a total of $K$ messages $W_1,W_2,\ldots,W_K$ in the system, and there are a total of $N$ databases that the messages are to be retrieved; the messages $(W_i,W_{i+1},\ldots,W_j)$ will sometimes be written together as $W_{i:j}$ for conciseness. Denote the set of possible queries that server-$n$ can accommodate as $\mathcal{Q}_n$, and denote its cardinality as $|\mathcal{Q}_n|$. The cardinality of a set $\mathcal{A}$ will be similarly denoted as $|\mathcal{A}|$ in the rest of the paper. Assume that a random key $\mathsf{F}$ is uniformly distributed on a certain finite set $\mathcal{F}$, which is used by the user to produce the (random) queries to the $N$ databases. The servers, after receiving the queries for message-$k$, denoted as $Q_n^{[k]}$, will reply with an answer $A_n^{[k]}$. A message $W_k$ consists of $L$ symbols, each symbol belonging to a finite alphabet $\mathcal{X}$. The messages are mutually independent, each of which is uniformly distributed on $\mathcal{X}^L$.  We further allow the query answers to be represented as a variable-length vector, whose elements are in the finite alphabet $\mathcal{Y}$. A mathematically precise description of the problem is given next via a set of encoding and decoding functions. 

\begin{definition}
\label{def:problem}
An $N$-server private information retrieval (PIR) storage code for $K$ messages, each of $L$-symbols in the alphabet $\mathcal{X}$, consists of
\begin{enumerate}[itemsep=0pt]
\item $N$ storage encoding functions:
\begin{align}
\Phi_{n}: \mathcal{X}^{KL}\rightarrow \mathcal{S}_n,\qquad n\in \{1,2,...,N\},\label{eqn:storage}
\end{align}
i.e., the stored information at database-$n$ is $S_n=\Phi_{n}(W_{1:K})$;
\item $N$ query functions: 
\begin{align}
&\phi_n: \{1,2,\ldots,K\}\times \mathcal{F}\rightarrow \mathcal{Q}_n,\qquad n\in \{1,2,...,N\},
\end{align}
i.e., the user chooses the query $Q^{[k]}_n=\phi_n(k,\mathsf{F})$ for server-$n$, using the index of the desired message and the random key $\mathsf{F}$;
\item $N$ answer length functions: 
\begin{align}\ell_n: \mathcal{Q}_n\rightarrow \{0,1,2,\ldots\},\qquad n\in \{1,2,...,N\}, 
\end{align} 
i.e., the length of the answer at each server, a non-negative integer, is a deterministic function of the query, but not the particular realization of the messages; 

\item $N$ answer functions: 
\begin{align}
\varphi_{n}: \mathcal{Q}_n\times \mathcal{S}_n\rightarrow \mathcal{Y}^{\ell_n},\qquad n\in\{1,2,...,N\},\label{eqn:answers}
\end{align}
where $\ell_n=\ell_n(q_n)$ with $q_n\in \mathcal{Q}_n$ being the (random) query for server-$n$, $\mathcal{Y}$ is the coded symbol alphabet, and in the sequel we write the query answer as $A^{[k]}_n\triangleq\varphi_{n}(Q^{[k]}_n,S_n)$ when the message index $k$ is relevant;
\item A reconstruction function using the answers from the servers together with the desired message index and the random key: 
\begin{align}
\psi: \prod_{n=1}^{N} \mathcal{Y}^{\ell_n}\times \{1,2,...,K\}\times \mathcal{F}\rightarrow \mathcal{X}^{L},
\end{align} 
i.e., $\hat{W}_k=\psi(A^{[k]}_{1:N},k,\mathsf{F})$ is the retrieved message. 
\end{enumerate}
These functions should satisfy the following two requirements:
\begin{enumerate}

\item \textbf{Correctness: } For any $k\in \{1,2,...,K\}$, $\hat{W}_k=W_k$. 

\item \textbf{Privacy: } For every $k,k'\in\{1,2,...,K\}$, $n\in \{1,2,...,N\}$, and $q\in \mathcal{Q}_n$, 
\begin{align}
\mathbf{Pr}(Q^{[k]}_n=q)=\mathbf{Pr}(Q^{[k']}_n=q).
\end{align}
\end{enumerate}
\end{definition}

It should be noted that $A_n^{[k]}$ is a function of both the messages and the query $Q_n^{[k]}$. Sometimes we need to refer to the answer for a fixed query $Q_n^{[k]}=q$, and this shall be written as $A_n^{(q)}$; the effectiveness of this notation can be seen immediately next. Because of the coding protocol requirement, the overall probability distribution factorizes as follows
\begin{align}
&P\left((Q_n^{[k]},W_{1;K},S_{1:N},A_n^{[k]})=(q,w_{1;K},s_{1:N},a)\right)\notag\\
&=P(Q_n^{[k]}=q)P\left((W_{1;K},S_{1:N})=(w_{1;K},s_{1:N})\right)P\left(A_n^{[k]}=a\Big{|} (Q_n^{[k]},W_{1;K},S_{1:N})=(q, w_{1;K},s_{1:N})\right),
\end{align}
where the conditional distribution is a deterministic one induced by the encoding function $\varphi_{n}$, which can further be simplified to
\begin{align}
P\left(A_n^{[k]}=a\Big{|} (Q_n^{[k]},W_{1;K},S_{1:N})=(q, w_{1;K},s_{1:N})\right)&=P\left(A_n^{[k])}=a\Big{|}(Q_n^{[k]},S_{n})=(q,s_{n})\right)\notag\\
&=P\left(A_n^{(q)}=a\Big{|}(Q_n^{[k]},S_{n})=(q,s_{n})\right).
\end{align}
As a consequence of the privacy requirement and the factorization above, the following joint distributions must also be identical for any $n=1,2,\ldots,N$,
\begin{align}
(A_n^{[k]},Q_n^{[k]},W_{1;K},S_{1:N})\sim (A_n^{[k']},Q_n^{[k']},W_{1;K},S_{1:N}),\quad k,k'\in \{1,2,\ldots,K\},\label{eqn:identical}
\end{align}
which implies that their marginal distributions will also be identical. 
\subsection{Operational and Informational Costs}
The \textit{operational} normalized storage cost for database-$n$ is defined as
\begin{align}
\alpha_n \triangleq \frac{\log_2|\mathcal{S}_n|}{L\log_2 |\mathcal{X}|},\quad n=1,2,\ldots,N, \label{alpha_def}
\end{align}
which is the amount of stored data per bit of individual message; the average storage (per node) cost is then defined as
\begin{align}
\alpha = \frac{1}{N}\sum_{n=1}^N \alpha_n. 
\end{align}
The \textit{operational} normalized download cost for database-$n$ is defined as
\begin{align}
\beta_n \triangleq \frac{\log_2|\mathcal{Y}|\Expt(\ell_n)}{L\log_2 |\mathcal{X}|}, \quad n=1,2,\ldots,N,\label{beta_def}
\end{align}
which is the expected amount of downloaded data per bit desired message at database-$n$, and the average per node download cost is 
\begin{align}
\beta = \frac{1}{N}\sum_{n=1}^N \beta_n. 
\end{align}
Note that $\beta_n$ does not depend on $k$, since the privacy requirement stipulates that the random variable $\ell_n$ has an identical distribution for all $k=1,2,\ldots,K$.

In the literature (e.g., \cite{Sun_Jafar_MPIR}), the \textit{informational} storage cost is sometimes used directly in place of the operational storage cost, i.e., 
\begin{align}
\alpha'_n\triangleq \frac{H(S_n)}{L\log_2 |\mathcal{X}|}, \quad n=1,2,\ldots,N,\label{alphaprime_def}
\end{align}
and correspondingly the average informational storage cost can be defined as
\begin{align}
\alpha' = \frac{1}{N}\sum_{n=1}^N \alpha'_n. 
\end{align}
Similarly the \textit{informational} download cost can be given as
\begin{align}
\beta'_n\triangleq  \frac{ H(A_n^{[k]}|\mathsf{F})}{L\log_2 |\mathcal{X}|}, \quad n=1,2,\ldots,N,\label{betaprime_def}
\end{align}
and the average informational download cost 
\begin{align}
\beta' = \frac{1}{N}\sum_{n=1}^N \beta'_n. 
\end{align}
Once again $\beta'_n$ does not depend on $k$, which can be justified as 
\begin{align}
\frac{ H(A_n^{[k]}|\mathsf{F})}{L\log_2 |\mathcal{X}|}=\frac{ H(A_n^{[k]}|Q_n^{[k]})}{L\log_2 |\mathcal{X}|}=\frac{ H(A_n^{[k']}|Q_n^{[k']})}{L\log_2 |\mathcal{X}|}=\frac{ H(A_n^{[k']}|\mathsf{F})}{L\log_2 |\mathcal{X}|},
\end{align}
where the first and last equality are due the Markov string $\mathsf{F}\leftrightarrow Q_n^{[k]}\leftrightarrow A_n^{[k]}$, and the equality in the middle is due to the privacy condition (\ref{eqn:identical}).

It is clear that 
\begin{align}
\alpha_n \geq \alpha_n\rq{},\quad n=1,2,\ldots,N.
\end{align}
The relation between $\beta_n$ and $\beta_n\rq{}$ is slightly more subtle. It can be seen that for any $n=1,2,\ldots,N$,
\begin{align}
\beta_n = \frac{\log_2|\mathcal{Y}|\Expt(\ell_n)}{L\log_2 |\mathcal{X}|}=\frac{\Expt\left[\Expt(\ell_n|\mathsf{F})\right]\log_2|\mathcal{Y}|}{L\log_2 |\mathcal{X}|}\geq \frac{\Expt\left[H(A^{[k]}_n|\mathsf{F}=f)\right]}{L\log_2 |\mathcal{X}|}=\frac{H(A^{[k]}_n|\mathsf{F})}{L\log_2 |\mathcal{X}|}.
\end{align}
As a consequence, we have 
\begin{align}
\alpha\geq \alpha', \qquad \beta \geq \beta', 
\end{align}
but they may not be equal. In this work, we adopt from a first principle the operational definitions, from which the informational definitions will emerge as the substitutes naturally. 

It was shown \cite{sun2017PIRcapacity} that the minimum download cost is
\begin{align}
\min \beta= \frac{1}{N}+\ldots+\frac{1}{N^{K}}=\frac{N^K-1}{N^K(N-1)},
\end{align}
and codes that can achieve this minimal value are often referred to as optimal private information retrieval codes, or capacity-achieving private information retrieval codes. The codes are optimal in the sense that the download cost is minimal. 
Note that the capacity achieving code given in \cite{sun2017PIRcapacity} assumed fully replicated messages at all databases, however in our setting the databases are not necessarily replicated. 

\section{Extreme Point Characterizations}

We consider the two extreme points in question in the following two subsections, respectively.

\subsection{Minimizing PIR Download Cost without Storage Redundancy}

The first main result we present is for the extreme case when the messages are stored across the databases without any redundancy, i.e., $N\alpha=K$. With such compressed storage, we shall provide a converse to confirm the folklore that the best PIR download strategy is to download everything. This is essentially established through a cut-set-like bound, however, we must apply the privacy condition in the bounding steps, instead of using the cut-set argument directy, to obtain the needed result. 

\begin{theorem}
\label{thm:cutset}
The per-node retrieval cost $\beta$ and per-node storage cost $\alpha$ must satisfy
\begin{align}
(N-1)\alpha +\beta\geq K.
\end{align}
\end{theorem}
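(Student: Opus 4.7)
The plan is to reduce the bound to a single structural lemma:
\begin{equation*}
  H(W_{1:K} | S_{\sim n}, A_n^{[k]}, \mathsf{F}) = 0 \quad \text{for every } n, k,
\end{equation*}
i.e., the stored content of any $N-1$ databases, together with a single answer from the remaining database and the random key, already determines the \emph{entire} message vector, not just $W_k$. I would prove this lemma by fixing $n,k$ and a realization $\mathsf{F}=f$ producing query $q=\phi_n(k,f)$, so that $A_n^{[k]}=\varphi_n(q,S_n)$. For any other index $k'\in\{1,\ldots,K\}$, the privacy condition gives $\Pr(Q_n^{[k']}=q)=\Pr(Q_n^{[k]}=q)>0$, so some realization $f'$ of $\mathsf{F}$ satisfies $\phi_n(k',f')=q$. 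For this $f'$, correctness forces the decoder $\psi(A_{1:N}^{[k']},k',f')$ to output $W_{k'}$; but each $A_m^{[k']}=\varphi_m(\phi_m(k',f'),S_m)$ depends on the storage only through $S_m$, and at $m=n$ this is exactly $\varphi_n(q,S_n)$. Hence any two storage realizations agreeing on $(S_{\sim n},\varphi_n(q,S_n))$ yield the same $W_{k'}$. Since this holds for every $k'$, the whole vector $W_{1:K}$ is a deterministic function of $(S_{\sim n},\varphi_n(q,S_n))$, which proves the lemma.

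Once the lemma is in hand, the rest is short. Independence of $\mathsf{F}$ from $(W_{1:K}, S_{1:N})$ gives $H(W_{1:K}|S_{\sim n}, \mathsf{F}) = H(W_{1:K}|S_{\sim n})$, and the lemma converts this into a mutual-information bound:
\begin{equation*}
  H(W_{1:K}|S_{\sim n}) = I(W_{1:K}; A_n^{[k]}|S_{\sim n}, \mathsf{F}) \leq H(A_n^{[k]}|\mathsf{F}) = \beta'_n L \log_2|\mathcal{X}|.
\end{equation*}
Because $S_{\sim n}$ is a function of $W_{1:K}$, we have $H(S_{\sim n}|W_{1:K})=0$, and therefore
\begin{equation*}
  H(S_{\sim n}) = H(W_{1:K}) - H(W_{1:K}|S_{\sim n}) \geq (K-\beta'_n) L\log_2|\mathcal{X}|.
\end{equation*}
Combining with subadditivity $H(S_{\sim n})\leq \sum_{m\neq n}H(S_m)$ and summing over $n$ gives $(N-1)\sum_m H(S_m) \geq N(K-\beta')L\log_2|\mathcal{X}|$. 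Using $H(S_m)\leq \alpha_m L\log_2|\mathcal{X}|$ and $\beta\geq\beta'$, this collapses to $(N-1)\alpha+\beta\geq K$.

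The whole argument stands or falls on the structural lemma, and that is the step I expect to require the most care: it exchanges the index $k'$ while holding the query $q$ fixed, which is possible only because of privacy, and it is precisely this index exchange that turns the naive coefficient $K$ on $\beta$ (produced by peeling messages one at a time) into the desired $1$.
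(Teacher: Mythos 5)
Your proof is correct, and it takes a genuinely different route from the paper's. The paper establishes Theorem~\ref{thm:cutset} through an inductive chain: starting from $H(S_{1:n-1,n+1:N},A_n^{[1]}|\mathsf{F})$, it peels off one message per step, using the privacy constraint (in the form of the joint-distribution equality (\ref{eqn:identical})) to replace $A_n^{[k]}$ by $A_n^{[k+1]}$ so that the next message $W_{k+1}$ becomes decodable. Your argument instead distills the entire induction into a single structural lemma: the stored contents of any $N-1$ databases, together with one answer from the remaining database and the randomness, already determine the whole vector $W_{1:K}$, not just $W_k$. You establish this combinatorially rather than information-theoretically — fixing the query $q$ realized at database $n$ under $(k,f)$, invoking privacy only to guarantee $\Pr(Q_n^{[k']}=q)>0$ so that some $f'$ maps $(k',f')$ to the same $q$, and then letting correctness under $(k',f')$ force $W_{k'}$ to be a function of $(S_{\sim n},\varphi_n(q,S_n))$. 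Once the lemma is in hand, the bound falls out of one clean mutual-information computation followed by subadditivity of entropy and averaging over $n$, with no need for the swap-and-peel induction. What each approach buys: the paper's entropy-swap technique generalizes more naturally to the proofs of Lemmas~\ref{lemma:T} and~\ref{lemma:TR} (where the same $(*)$-step recurs and the full-vector determinism would not help, since those bounds condition on $S_n$ rather than $A_n^{[k]}$), whereas your lemma isolates the combinatorial content of the cut-set bound — that the remaining database's contribution is worth at most one answer's entropy rather than $K$ answers' worth — in a form that makes the role of the privacy constraint transparent. Both arguments are sound; yours is somewhat more conceptual for this particular theorem.
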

\begin{proof}
We start by writing the following inequalities, 
\begin{align}
&\sum_{n'\neq n}\alpha_{n'} L\log_2|\mathcal{X}|+\beta_nL\log_2|\mathcal{X}|\notag\\
&\geq H(S_{1:n-1,n+1:N},A_n^{[1]}|\mathsf{F})\notag\\
&= H(S_{1:n-1,n+1:N},A_n^{[1]},W_1|\mathsf{F})\notag\\
&= L\log_2|\mathcal{X}| + H(S_{1:n-1,n+1:N},A_n^{[1]}|W_1,\mathsf{F})\notag\\
&\stackrel{(*)}{=} L\log_2|\mathcal{X}| + H(S_{1:n-1,n+1:N},A_n^{[2]}|W_1,\mathsf{F}),
\end{align}
where the inequality $(*)$ can be justified as follows
\begin{align}
&H(S_{1:n-1,n+1:N},A_{n}^{[k]}|W_{1:k},\mathsf{F})\notag\\
&=H(S_{1:n-1,n+1:N},A_{n}^{[k]}|W_{1:k},Q_n^{[k]})\notag\\
&=H(S_{1:n-1,n+1:N},A_{n}^{[k+1]}|W_{1:k},Q_n^{[k+1]})\notag\\
&=H(S_{1:n-1,n+1:N},A_{n}^{[k+1]}|W_{1:k},\mathsf{F}),
\end{align}
where the first equality is because of the Markov string $\mathsf{F}\leftrightarrow Q_{n}^{[k]}\leftrightarrow (W_{1:K},S_{1:N},A_n^{[k]})$ and the last equality because of $\mathsf{F}\leftrightarrow Q_{n}^{[k+1]}\leftrightarrow (W_{1:K},S_{1:N},A_n^{[k]+1})$, and the equality in the middle is due to the privacy relation, or more precisely here the identical distribution between
\begin{align}
(S_{1:n-1,n+1:N},W_{1:k},A_{n}^{[k]},Q_n^{[k]})\sim (S_{1:n-1,n+1:N},W_{1:k},A_{n}^{[k+1]},Q_n^{[k+1]}),
\end{align}
due to (\ref{eqn:identical}). 
Continuing to apply the bounding approach on the second term in a similar manner, we eventually arrive at
\begin{align}
\sum_{n'\neq n}\alpha_{n'} L\log_2|\mathcal{X}|+\beta_nL\log_2|\mathcal{X}|\geq K L\log_2|\mathcal{X}|.
\end{align}
Dividing both sides by $L\log_2|\mathcal{X}|$ gives 
\begin{align}
\sum_{n'\neq n}\alpha_{n'}+\beta_n\geq K.\label{eqn:cutsetPIR}
\end{align}
Summing (\ref{eqn:cutsetPIR}) over $n=1,2,\ldots,N$ then normalizing give the desired result.
\end{proof}

\begin{figure}[t]
\centering
\includegraphics[width=0.65\textwidth]{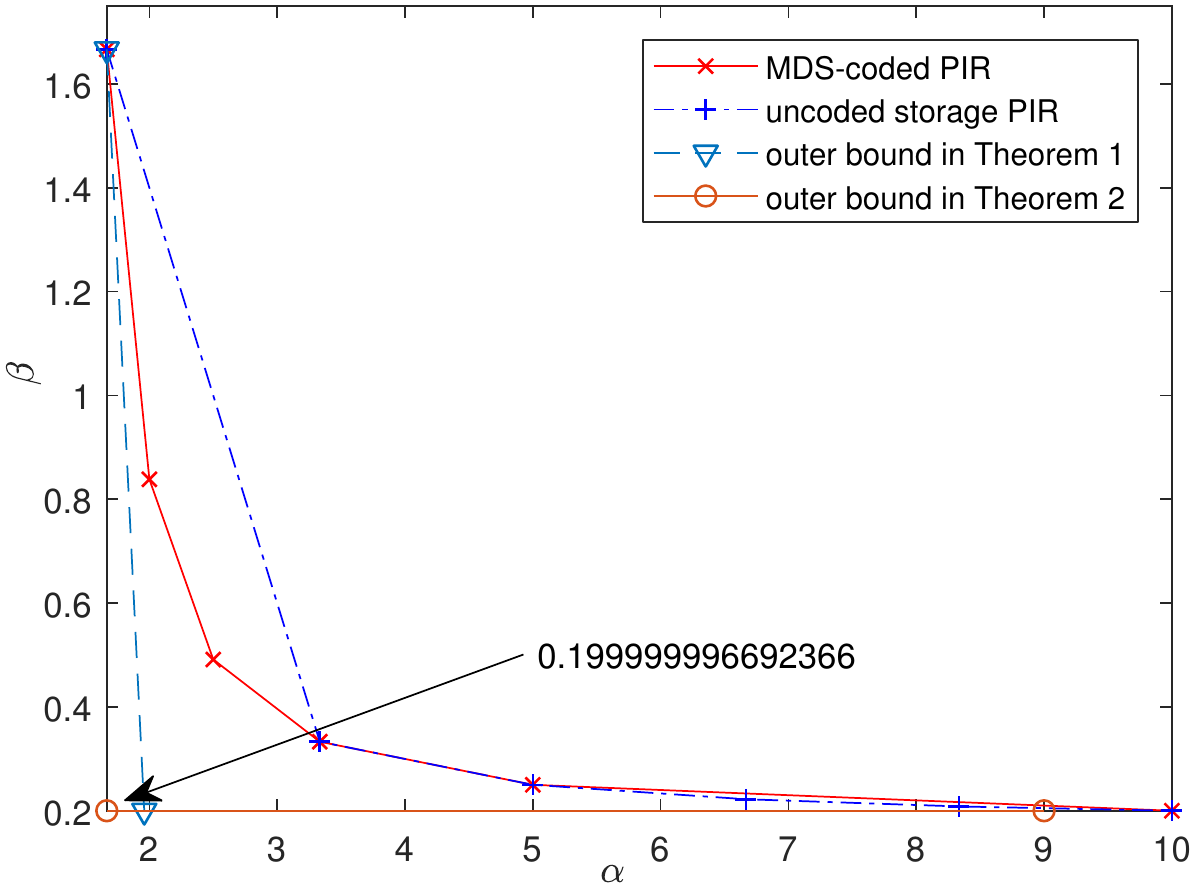}
\caption{Bounds in Theorem \ref{thm:cutset} and Theorem \ref{thm:extreme} when $N=6$ and $K=10$, where the arrow indicates the intercept of the bound in Theorem \ref{thm:extreme} and $\beta=\frac{1}{5}-\frac{1}{5*6^{10}}$ is the minimum download cost.\label{fig:bounds}}
\end{figure}

This bound is illustrated in Fig. \ref{fig:bounds}, together with the achievable tradeoffs with MDS-coded storage \cite{Banawan_Ulukus} and uncoded storage \cite{attia2018capacity}. It is seen that this bound is almost vertical, but it is sufficient to characterize one of two extreme points. The next corollary now follows directly from the theorem, which states that when the storage code has no redundancy, the optimal strategy is to download every message. 
\begin{corollary}
At the minimum storage point $\alpha=\frac{K}{N}$, we must have
\begin{align}
\beta\geq \frac{K}{N}. 
\end{align}
Moreover, the equality can be achieved by downloading all the information from the databases. 
\end{corollary}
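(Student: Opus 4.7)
The plan is to treat the two assertions separately: the lower bound is an immediate substitution into Theorem \ref{thm:cutset}, and the matching achievability is handled by exhibiting an explicit trivial-query scheme.

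For the converse direction, I would start from Theorem \ref{thm:cutset}, namely $(N-1)\alpha + \beta \geq K$, and simply substitute $\alpha = K/N$. This gives $\beta \geq K - (N-1)K/N = K/N$, which is exactly the claimed inequality. No further information-theoretic work is required; the heavy lifting was done in the theorem, whose proof used the privacy condition to ``rotate'' the message index on $A_n^{[k]}$ through all $K$ values.

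For achievability I would construct a concrete code meeting the formal requirements of Definition \ref{def:problem}. Assume without loss of generality that $L$ is a multiple of $N$ (otherwise rescale $L$ by taking larger message blocks, which does not change the normalized costs). Split each message into $N$ equal sub-blocks $W_k = (W_k^{(1)}, W_k^{(2)}, \ldots, W_k^{(N)})$, each of length $L/N$ symbols over $\mathcal{X}$, and let the storage encoder $\Phi_n$ output $S_n = (W_1^{(n)}, W_2^{(n)}, \ldots, W_K^{(n)})$. Then $|\mathcal{S}_n| = |\mathcal{X}|^{KL/N}$ and $\alpha_n = K/N$ for every $n$, so the total storage has no redundancy. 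Choose $\mathcal{Q}_n$ to be a singleton and let $\phi_n(k,\mathsf{F})$ always return that single query; the answer function $\varphi_n$ simply returns the entire $S_n$, so $\ell_n$ is the deterministic constant $KL/N$ (with $\mathcal{Y} = \mathcal{X}$), yielding $\beta_n = K/N$ and hence $\beta = K/N$. The reconstruction function concatenates the $N$ sub-blocks of $W_k$ received from the $N$ databases. Correctness is immediate by construction, and privacy is trivial since the query distribution does not depend on the message index $k$ at all.

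The only step that requires a moment of care is matching the construction to the formal definition: one must check that the operational storage and download costs in (\ref{alpha_def}) and (\ref{beta_def}), rather than merely their informational analogues, achieve $K/N$. That is why I keep the queries deterministic and the answer length constant — this prevents any slack between the operational and informational quantities, so the lower bound of $K/N$ is attained exactly. Beyond this bookkeeping there is no real obstacle, since the corollary is essentially a transcription of Theorem \ref{thm:cutset} at its boundary.
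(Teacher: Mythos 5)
Your proof is correct and matches the approach the paper intends: the lower bound is a direct substitution of $\alpha=K/N$ into Theorem \ref{thm:cutset}, and achievability is the folklore scheme of partitioning each message across databases with a deterministic ``send everything'' query, which makes the operational costs $(\ref{alpha_def})$ and $(\ref{beta_def})$ hit $K/N$ exactly. The paper states the corollary ``follows directly'' without spelling out the achievability code, so your explicit construction is simply a fleshed-out version of the same argument.
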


\subsection{Minimum Storage Overhead for PIR Capacity-Achieving Codes}

The next main result is a novel lower bound on the storage cost and download cost tradeoff, which leads to an approximate characterization of the extreme point for capacity-achieving codes.
 
\begin{theorem}
\label{thm:extreme}
The per-node retrieval cost $\beta$ and per-node storage cost $\alpha$ when $N\geq 3$ must satisfy
\begin{align}
\frac{\alpha+(N-1) \beta}{N-2}+ N^{K-1}\beta\geq \frac{K}{N-2} +\frac{N^{K}-1}{N(N-1)}.
\end{align}
\end{theorem}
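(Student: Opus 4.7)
My plan is to extend the Sun--Jafar capacity converse by introducing a storage term into the iterated entropy argument. For each database $n$, consider
$$g_k^{(n)} \;:=\; H(S_n, A_{1:N}^{[k]} \mid W_{1:k-1}, \mathsf{F}).$$
Since $A_n^{[k]}$ is a deterministic function of $(S_n,\mathsf{F})$, we in fact have $g_k^{(n)} = H(S_n, A_{\setminus n}^{[k]} \mid W_{1:k-1}, \mathsf{F})$, so subadditivity yields $g_1^{(n)} \leq (\alpha_n + \sum_{n'\neq n} \beta_{n'}) L\log_2|\mathcal{X}|$. Averaging over $n$ produces the upper bound $\frac{1}{N}\sum_n g_1^{(n)} \leq (\alpha + (N-1)\beta)L\log_2|\mathcal{X}|$, which explains the appearance of the coefficient $(N-1)\beta$ on the left-hand side of the theorem.

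For a matching lower bound, I would run a Sun--Jafar-style recursion on $g_k^{(n)}$. Decodability of $W_k$ from $(A_{1:N}^{[k]}, \mathsf{F})$ extracts one copy of $L\log_2|\mathcal{X}|$, and then splitting $H(S_n, A_{1:N}^{[k]} \mid W_{1:k}, \mathsf{F})$ as $H(S_n \mid W_{1:k}, \mathsf{F}) + H(A_{\setminus n}^{[k]} \mid S_n, W_{1:k}, \mathsf{F})$ and applying the averaging-plus-privacy trick (using the marginal privacy identity $H(A_{n'}^{[k]} \mid S_n, W_{1:k}, \mathsf{F}) = H(A_{n'}^{[k+1]} \mid S_n, W_{1:k}, \mathsf{F})$ for each single $n'\neq n$) produces a recursion of the form
$$g_k^{(n)} \;\geq\; L\log_2|\mathcal{X}| + c_1\, \gamma_k^{(n)} + c_2\, g_{k+1}^{(n)},\qquad \gamma_k^{(n)} := H(S_n\mid W_{1:k},\mathsf{F}).$$
Iterating from $k=1$ to $k=K$, summing over $n$, and invoking the submodular-type inequality $\sum_n H(S_n \mid W_{1:k}, \mathsf{F}) \geq H(S_{1:N} \mid W_{1:k}, \mathsf{F}) = (K-k)L\log_2|\mathcal{X}|$ (the last equality because $(S_{1:N},\mathsf{F})$ decodes every message) converts the weighted storage telescope into a lower bound that scales as $K$. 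This should yield an inequality relating $\alpha + (N-1)\beta$ to $K$ together with a geometric-sum analogue of the Sun--Jafar capacity term, which one then combines with the standalone Sun--Jafar bound $N^{K-1}\beta \geq \tfrac{N^K-1}{N(N-1)}$ through a linear combination whose weights match the stated factors $1/(N-2)$ and $N^{K-1}$.

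The hard part will be closing the gap between this recursion and the exact form of the theorem. Because the $S_n$ pivot leaves only $N-1$ effective answer slots for the averaging step, the natural recursion factor is $1/(N-1)$ rather than the $1/N$ appearing in the pure capacity converse, so a direct iteration produces a geometric series with denominator $(N-1)^{K-1}$ instead of the $N^{K-1}$ demanded by the statement; simultaneously, forcing the tight inequality $\alpha + (N-1)\beta \geq K$ at the capacity-achieving point $\beta = \tfrac{N^K-1}{N^K(N-1)}$ appears to require squeezing out more slack than any single chain provides. I anticipate that the correct combination balances the storage-pivoted chain $g_k^{(n)}$ against the pure answer chain $H(A_{1:N}^{[k]}\mid W_{1:k-1},\mathsf{F})$, so that the $1/N$ factor of the pure Sun--Jafar recursion is inherited where needed, while the storage accumulation $\sum_n \gamma_k^{(n)} \geq (K-k)L\log_2|\mathcal{X}|$ supplies the remaining $K$ on the right-hand side.
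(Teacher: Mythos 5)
You have correctly identified the relevant ingredients: the storage-pivoted quantity $g_k^{(n)} = H(S_n, A_{1:n-1,n+1:N}^{[k]} \mid W_{1:k-1}, \mathsf{F})$, which after re-indexing and applying the privacy relation is exactly the paper's $V_n^{k-1}$; the averaging step that produces the $\alpha + (N-1)\beta$ coefficient; the single-chain recursion with geometric ratio $1/(N-1)$; and the recognition that this chain must be balanced against the pure answer chain $T^{k} = H(A_{1:N}^{[k]}\mid W_{1:k},\mathsf{F})$ in order to recover the $1/N$ Sun--Jafar factor.

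However, the concrete plan you sketch does not close. Feeding the subadditivity bound $\sum_n H(S_n\mid W_{1:k},\mathsf{F}) \geq (K-k)L\log_2|\mathcal{X}|$ into the single-chain iteration cannot produce a bound on $\alpha + (N-1)\beta$ that scales as $K$: each storage term appears in the telescope with weight $(N-1)^{-(k-1)}$, so the weighted sum over $k$ is $O(1)$, not $\Theta(K)$. Consequently the proposed endgame of ``combine with the standalone Sun--Jafar bound by a linear combination'' cannot be executed, since it implicitly presumes a clean separate inequality $\alpha + (N-1)\beta \gtrsim K$ that your recursion alone does not furnish (and the theorem is a genuinely coupled inequality, not a sum of two independent ones).

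The missing mechanism is that the two chains must be coupled at every inductive step, not merged at the end. The paper's Lemma~\ref{lemma:TR} establishes that the combination $\frac{V_n^{k}}{N-2} + T^{k}$ satisfies a recursion in which $V^{k+1}$ enters with the \emph{undamped} coefficient $\frac{1}{N-2}$ while $T^{k+1}$ enters with the Sun--Jafar coefficient $\frac{1}{N}$. The route is: start from your $V_n^k$ recursion, add $T^k$, absorb the residual storage term $H(S_n\mid W_{1:k},\mathsf{F})$ into $T^k$ by noting $H(S_n\mid W_{1:k},\mathsf{F}) = H(S_n, A_n^{[k]}\mid W_{1:k},\mathsf{F})$ (since $A_n^{[k]}$ is a deterministic function of $S_n$ and the query), observe that the leading terms of the rearranged expression have the same form as $\frac{V_n^k}{N-2} + T^k$ with modified coefficients, and iterate this internal manipulation to its fixed point. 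That internal fixed-point step is what eliminates the residual storage term and strips off the $1/(N-1)$ damping. The theorem then follows by an induction on $\frac{V^k}{N-2} + N^{K-k-1}T^k$ that applies Lemma~\ref{lemma:TR} to one unit of $T^k$ and Lemma~\ref{lemma:T} to the remaining weight $N^{K-k-1}-1$; it is this internal recursion inside Lemma~\ref{lemma:TR}, not a subadditivity bound on $\sum_n H(S_n\mid\cdot)$, that is the technical crux your sketch still lacks.
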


The proof of this theorem can be found in Appendix \ref{appendix:B}. The most important implication of this theorem is the following corollary, which states that for capacity-achieving PIR codes, each database must store at least $K-1$ messages on average. 
\begin{corollary}
At the PIR capacity point $\beta=\frac{N^K-1}{N^K(N-1)}$, we have
\begin{align}
\alpha\geq \left(K-\frac{N^K-1}{N^K}\right)>(K-1). 
\end{align}
\end{corollary}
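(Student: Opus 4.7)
The plan is to derive this corollary as an immediate algebraic consequence of Theorem \ref{thm:extreme}. Since that theorem already supplies a linear inequality in $\alpha$ and $\beta$, and the capacity point pins $\beta$ down to $\beta^\star = \frac{N^K-1}{N^K(N-1)}$, the only work left is substitution followed by simplification. Nothing further about the PIR structure needs to be revisited.

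First I would clear the denominator $N-2$ in the bound of Theorem \ref{thm:extreme} and isolate $\alpha$, rewriting it as
\begin{align*}
\alpha \;\geq\; K \;+\; \frac{(N-2)(N^{K}-1)}{N(N-1)} \;-\; (N-1)\beta \;-\; (N-2)N^{K-1}\beta.
\end{align*}
Next I would plug in $\beta = \beta^\star$, computing the two $\beta$-terms separately. The first gives $(N-1)\beta^\star = \frac{N^K-1}{N^K}$, and the second gives $(N-2)N^{K-1}\beta^\star = \frac{(N-2)(N^K-1)}{N(N-1)}$. The crucial observation is that the second of these cancels exactly with the constant $\frac{(N-2)(N^K-1)}{N(N-1)}$ already appearing on the right-hand side; this is the whole reason the bound of Theorem \ref{thm:extreme} is tuned to give a clean answer at the capacity point. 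What remains after the cancellation is $\alpha \geq K - \frac{N^K-1}{N^K}$, which is exactly the stated bound.

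Finally, the strict inequality $K - \frac{N^K-1}{N^K} > K-1$ is immediate from $\frac{N^K-1}{N^K} = 1 - N^{-K} < 1$, so $\alpha > K-1$ as claimed. There is no genuine obstacle here: the substantive content of the corollary lives entirely inside Theorem \ref{thm:extreme}, and the only thing to be careful about is tracking the $(N-2)$-factors so that the cancellation between the $\beta$-coefficient $(N-2)N^{K-1}$ and the constant term $\frac{(N-2)(N^K-1)}{N(N-1)}$ is visible. Since the result is an equality-in-one-direction check, no inductive or coupling argument is needed, and the corollary can be stated with a two- or three-line proof.
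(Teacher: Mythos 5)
Your proposal is correct and matches the paper's (implicit) derivation: the corollary is obtained by clearing the $N-2$ denominator in Theorem~\ref{thm:extreme}, substituting $\beta=\frac{N^K-1}{N^K(N-1)}$, and observing the cancellation of the $(N-2)N^{K-1}\beta$ term against $\frac{(N-2)(N^K-1)}{N(N-1)}$, which is exactly how the bound was designed. The only point worth keeping in mind is that this derivation, like Theorem~\ref{thm:extreme} itself, requires $N\geq 3$ so that multiplying through by $N-2$ is legitimate.
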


An illustration of this bound can also be found in Fig. \ref{fig:bounds}. It can be seen that the proposed bound in Theorem \ref{thm:extreme} is almost horizontal, and its intersection with the horizontal axis gives a lower bound on the minimum storage cost when the code is capacity-achieving. Since a trivial storage solution is to replicate all the messages at all the databases, this corollary in fact provides a characterization of the minimum storage cost for capacity-achieving codes within an additive gap of one message. 

In the proof of Theorem \ref{thm:extreme}, the following auxiliary quantities and their relation are important :
\begin{align}
T^{k}&\triangleq H(A_{1:N}^{[k]}|W_{1:k},\mathsf{F}),&\quad k=1,2,\ldots,K,\\
 V_n^{k}&\triangleq H(A_{1:n-1,n+1:N}^{[k]},S_n|W_{1:k},\mathsf{F}),&\quad n=1,2,\ldots,N,\quad k=1,2,\ldots,K,\\
 V^k&=\frac{\sum_{n=1}^NV_n^k}{N},&\quad k=1,2,\ldots,K. 
\end{align}
Due to the definitions above, it is clear that
\begin{align}
T^{K}=V^K=0.
\end{align}

The following two auxiliary lemmas are instrumental to the proof of the theorem, whose proofs can be found in Appendix \ref{appendix:A}.  The first lemma is a recursive relation on $T^k$.
\begin{lemma}
\label{lemma:T}
For any $k=1,2\ldots,K-1$
\begin{align}
T^{k}\geq \frac{L\log_2|\mathcal{X}|}{N}+\frac{T^{k+1}}{N}.\label{eqn:T}
\end{align}
\end{lemma}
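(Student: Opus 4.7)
The plan is to chain together three standard ingredients that are already in play elsewhere in the paper: the decodability of $W_{k+1}$, a submodularity (chain-rule/independence-bound) step to split the joint entropy of all $N$ answers into a sum of single-server entropies, and the privacy-based swap that relabels the query from $[k+1]$ to $[k]$ one server at a time. Combining them gives exactly $N T^{k} \geq L \log_2|\mathcal{X}| + T^{k+1}$, which is the claim.

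Concretely, I would start from $T^{k+1}$ and work backwards. By correctness, $W_{k+1}$ is a deterministic function of $(A_{1:N}^{[k+1]}, \mathsf{F})$, and since $W_{k+1}$ is independent of $(W_{1:k}, \mathsf{F})$ and uniform on $\mathcal{X}^L$, a one-line entropy calculation yields
\begin{align*}
H(A_{1:N}^{[k+1]} \mid W_{1:k}, \mathsf{F}) \;=\; H(A_{1:N}^{[k+1]}, W_{k+1} \mid W_{1:k}, \mathsf{F}) \;\geq\; L\log_2|\mathcal{X}| + T^{k+1}.
\end{align*}
Next, the chain rule and ``conditioning reduces entropy'' give
\begin{align*}
H(A_{1:N}^{[k+1]} \mid W_{1:k}, \mathsf{F}) \;\leq\; \sum_{n=1}^{N} H(A_{n}^{[k+1]} \mid W_{1:k}, \mathsf{F}).
\end{align*}

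The key step is then to argue $H(A_{n}^{[k+1]} \mid W_{1:k}, \mathsf{F}) = H(A_{n}^{[k]} \mid W_{1:k}, \mathsf{F})$ for each $n$. This is the same maneuver invoked at the $(*)$ step in the proof of Theorem~\ref{thm:cutset}: write the conditioning on $\mathsf{F}$ as conditioning on $Q_n^{[k]}$ via the Markov string $\mathsf{F}\leftrightarrow Q_n^{[k]}\leftrightarrow A_n^{[k]}$ (using that $\mathsf{F}$ is independent of $W_{1:K}$ and $S_n$, so conditioning on $\mathsf{F}=f$ is distributionally equivalent to conditioning on the corresponding realization of $Q_n^{[k]}$), and then invoke the identical-distribution relation \eqref{eqn:identical} to swap the query index from $k+1$ to $k$. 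Using $H(A_{n}^{[k]} \mid W_{1:k}, \mathsf{F}) \leq H(A_{1:N}^{[k]} \mid W_{1:k}, \mathsf{F}) = T^{k}$ for each $n$ yields
\begin{align*}
L\log_2|\mathcal{X}| + T^{k+1} \;\leq\; \sum_{n=1}^{N} H(A_{n}^{[k]} \mid W_{1:k}, \mathsf{F}) \;\leq\; N\, T^{k},
\end{align*}
which rearranges to the claim.

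The only subtle point, and the one I would take the most care with, is the privacy swap from $A_n^{[k+1]}$ to $A_n^{[k]}$ under conditioning on $\mathsf{F}$ rather than directly on $Q_n^{[k]}$. The rest is essentially the Sun--Jafar converse template adapted to the non-replicated storage setting; the only difference from that template is that we are bounding $H(A_{1:N}^{[k+1]} | W_{1:k}, \mathsf{F})$ from above via a union-type bound on single-server answers (rather than exploiting replication), and this union bound is exactly where the factor $\frac{1}{N}$ in \eqref{eqn:T} comes from.
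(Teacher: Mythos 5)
Your proof is correct and uses exactly the same ingredients as the paper's proof of Lemma~\ref{lemma:T} (the single-server upper bound $H(A_n^{[k]}\mid W_{1:k},\mathsf{F})\leq T^k$, the privacy swap $[k]\leftrightarrow[k+1]$, sub-additivity of entropy over the $N$ servers, and decodability of $W_{k+1}$), merely traversed in the opposite direction—starting from $T^{k+1}$ and working up rather than starting from $T^k$ and working down. The one small imprecision is stating the Markov string as $\mathsf{F}\leftrightarrow Q_n^{[k]}\leftrightarrow A_n^{[k]}$ when you need the stronger form $\mathsf{F}\leftrightarrow Q_n^{[k]}\leftrightarrow (W_{1:K},S_{1:N},A_n^{[k]})$ to justify the swap under conditioning on $W_{1:k}$, but your parenthetical about $\mathsf{F}$ being independent of $(W_{1:K},S_n)$ supplies exactly that.
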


The second lemma is a refined recursive relation involving both $V^k$ and $T^k$.
\begin{lemma}
\label{lemma:TR}
For any $k=1,2\ldots,K-1$ and $n=1,2,\ldots,N$,
\begin{align}
\frac{V^{k}}{N-2}+T^{k}\geq \left(\frac{1}{N-2}+\frac{1}{N}\right)L\log_2|\mathcal{X}|+\frac{V^{k+1}}{N-2}+\frac{T^{k+1}}{N}.\label{eqn:TR}
\end{align}
\end{lemma}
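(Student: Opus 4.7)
The plan is to mirror the proof of Lemma~1 but exploit the extra storage variable $S_n$ appearing in $V_n^k$.  First I would establish the identity
\begin{align*}
H(A_{1:N}^{[k+1]},S_n|W_{1:k},\mathsf{F})=L\log_2|\mathcal{X}|+V_n^{k+1},
\end{align*}
which follows from correctness (so that $W_{k+1}$ is a deterministic function of $(A_{1:N}^{[k+1]},\mathsf{F})$) together with the fact that $A_n^{[k+1]}$ is itself a deterministic function of $(S_n,\mathsf{F})$, giving $V_n^{k+1}=H(A_{1:N}^{[k+1]},S_n|W_{1:k+1},\mathsf{F})$.

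Next I would expand the left-hand side as $H(S_n|W_{1:k})+H(A_{1:n-1,n+1:N}^{[k+1]}|S_n,W_{1:k},\mathsf{F})$ by the chain rule, apply sub-additivity to the second entropy, and then invoke the per-server privacy identity from the proof of Theorem~1, extended by adding $S_n$ to the conditioning set.  The relevant Markov chain $\mathsf{F}\leftrightarrow Q_{n'}^{[k]}\leftrightarrow(W_{1:K},S_{1:N},A_{n'}^{[k]})$ still holds with $S_n$ on the conditioning side (since $\mathsf{F}$ is independent of $(W_{1:K},S_{1:N})$), so for each $n'\neq n$ one obtains $H(A_{n'}^{[k+1]}|S_n,W_{1:k},\mathsf{F})=H(A_{n'}^{[k]}|S_n,W_{1:k},\mathsf{F})$.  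Combining these ingredients yields the per-server inequality
\begin{align*}
L\log_2|\mathcal{X}|+V_n^{k+1}\leq H(S_n|W_{1:k})+\sum_{n'\neq n}H(A_{n'}^{[k]}|S_n,W_{1:k},\mathsf{F}).
\end{align*}
Summing over $n$ and dividing by $N$, the first term on the right is upper bounded by $V^k$ (since $V_n^k\geq H(S_n|W_{1:k})$).  In the double sum one swaps the order of summation and uses $H(A_{n'}^{[k]}|S_{n'},W_{1:k},\mathsf{F})=0$ (the diagonal vanishes, because $A_{n'}^{[k]}$ is determined by $(S_{n'},\mathsf{F})$), and each remaining summand is bounded trivially by $T^k$.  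Finally, I would combine the resulting averaged inequality with Lemma~1, which converts part of the $T^k$ contribution into $(L\log_2|\mathcal{X}|+T^{k+1})/N$; dividing through by $N-2$ should then recover the stated inequality.

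The hard part will be the arithmetic in this last step.  Per-server privacy is strictly per-database rather than joint across databases, so the sub-additivity step above is inherently lossy, and the specific coefficient $\frac{1}{N-2}$ on $V^k$ in the lemma (rather than, say, $\frac{1}{N-1}$ that a naive bound yields) signals that exactly one unit of $T^k$ must be absorbed via Lemma~1.  Carrying out that absorption so that the $\frac{L\log_2|\mathcal{X}|}{N}$ and $\frac{T^{k+1}}{N}$ terms emerge with exactly the claimed coefficients—and verifying that no residual slack remains on the right-hand side—is where the bookkeeping becomes delicate and will be the main technical obstacle.
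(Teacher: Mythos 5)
Your per-server inequality
\begin{align*}
L\log_2|\mathcal{X}|+V_n^{k+1}\leq H(S_n|W_{1:k})+\sum_{n'\neq n}H(A_{n'}^{[k]}|S_n,W_{1:k},\mathsf{F})
\end{align*}
is correct as a starting point, but the way you then close the argument is too lossy and does not recover the claimed coefficients. The problem occurs in two places. First, when you sum over $n$ and replace $\frac{1}{N}\sum_n H(S_n|W_{1:k})$ by $V^k$, you discard the entire $H(A_{1:n-1,n+1:N}^{[k]}|S_n,W_{1:k},\mathsf{F})$ portion of each $V_n^k$; the paper never makes this sacrifice. Second, and more decisively, the double sum $\frac{1}{N}\sum_n\sum_{n'\neq n}H(A_{n'}^{[k]}|S_n,W_{1:k},\mathsf{F})$ has $N(N-1)$ summands, and bounding each summand by $T^k$ produces the coefficient $(N-1)$ on $T^k$, yielding only $V^k+(N-1)T^k\geq L\log_2|\mathcal{X}|+V^{k+1}$. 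Multiplying the target inequality by $N-2$ shows we need $V^k+(N-2)T^k\geq L\log_2|\mathcal{X}|+\frac{N-2}{N}L\log_2|\mathcal{X}|+V^{k+1}+\frac{N-2}{N}T^{k+1}$, which is a strictly stronger statement (smaller left side, larger right side) than what your chain delivers. Lemma~1 gives $T^k\geq\frac{1}{N}(L\log_2|\mathcal{X}|+T^{k+1})$, but subtracting one copy of $T^k$ from your left-hand side and attempting to absorb it via Lemma~1 moves the inequality in the wrong direction: you would obtain $V^k+(N-2)T^k\geq L\log_2|\mathcal{X}|+V^{k+1}-T^k$, which is weaker, not stronger. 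There is genuine residual slack, and no amount of bookkeeping fixes it.

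The idea you are missing is the paper's normalized Han-type step: instead of bounding $H(A_{1:n-1,n+1:N}^{[k+1]}|S_n,\cdot)$ above by the raw sum $\sum_{n'\neq n}H(A_{n'}^{[k+1]}|S_n,\cdot)$, the paper bounds $H(A_{1:n-1,n+1:N}^{[k]}|S_n,\cdot)$ \emph{below} by the average $\frac{1}{N-1}\sum_{n'\neq n}H(A_{n'}^{[k]}|S_n,\cdot)$, applies privacy to the averaged singletons, then goes back up by sub-additivity. This ``lower bound by the average, swap indices, recombine'' dance keeps the coefficient on the answer terms at $\frac{1}{N-1}$ rather than $1$, giving $V_n^k\geq\frac{N-2}{N-1}H(S_n|W_{1:k},\mathsf{F})+\frac{1}{N-1}(L\log_2|\mathcal{X}|+V_n^{k+1})$. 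Even that is not directly summable: the paper then weaves in $T^k$ through a self-similar iteration that repeatedly uses the fact that $A_n^{[k]}$ is a deterministic function of $(S_n,\mathsf{F})$ to regroup $H(A_n^{[k]}|\cdot)$, $H(S_n|\cdot)$, and $H(A_{1:n-1,n+1:N}^{[k]}|S_n,\cdot)$ terms until the $\frac{1}{N-2}$ and $\frac{1}{N}$ coefficients emerge cleanly. Neither the normalization nor the iteration appears in your outline, and both are essential.
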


Using the recursive relations among $T^k$'s and $V^k$'s in these two lemmas, an induction can be used to prove Theorem \ref{thm:extreme}, which can be found in Appendix \ref{appendix:B}.

\section{An Improved Outer Bound via Pseudo Messages}

In this section, we shall take a closer look at the special case $N=K=2$. Theorem \ref{thm:cutset} in this case specializes to the bound
\begin{align}
\alpha+\beta\geq 2.
\end{align}
As a consequence, the minimum download cost $\beta$ when the storage cost is minimal is clearly $\beta=1$, and this settles one of the two extreme points. Since the messages must be held in the databases as a whole, it is clear that $\alpha\geq 1$ regardless $\beta$, however, the more sophisticated bound in Theorem \ref{thm:extreme} does not apply to $N=2$. The question we wish to address in this section is for this special case $N=K=2$,  whether the other extreme point, where the download cost is minimal, can be more accurately approximated.  For this purpose, we present a novel outer bound based on a pseudo-message technique. 

The main result of this section is the following theorem, which provides an improved outer bound for the storage cost and download cost tradeoff. 

\begin{theorem}
For $N=K=2$, we must have $3\alpha+8\beta \geq 10$.
\label{theorem:NK22}
\end{theorem}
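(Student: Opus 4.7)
The plan is to enlarge the probability space via a Markov coupling in the spirit of Ozarow and Zhang--Yeung, and then exploit submodularity on the enlarged space to obtain an inequality strictly sharper than any Shannon-type bound derivable from the original variables alone. Since Theorem~\ref{thm:extreme} is vacuous at $N=2$, the recursive machinery built on $T^k$ and $V^k$ must be reconstructed with pseudo variables in place, and the target is to squeeze the extra $1/3$ of storage needed to turn $\alpha \geq 1$ into $\alpha \geq 4/3$ at the capacity point $\beta = 3/4$.

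\textbf{Step 1: build the pseudo system.} I would introduce auxiliary variables $(\tilde{W}_2, \tilde{S}_1, \tilde{S}_2, \tilde{A}_{1:2}^{[1]}, \tilde{A}_{1:2}^{[2]})$ jointly distributed with the original $(W_{1:2}, S_{1:2}, A_{1:2}^{[1]}, A_{1:2}^{[2]}, \mathsf{F})$ so that $\tilde{W}_2$ is an independent copy of $W_2$ (also independent of $W_1$), the pseudo storage is $\tilde{S}_n = \Phi_n(W_1, \tilde{W}_2)$, and the pseudo answers are $\tilde{A}_n^{[k]} = \varphi_n(Q_n^{[k]}, \tilde{S}_n)$ using the same random key $\mathsf{F}$. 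By construction, $(W_1, \tilde{W}_2, \tilde{S}_{1:2}, \tilde{A}_{1:2}^{[k]}, \mathsf{F})$ has the same joint law as $(W_1, W_2, S_{1:2}, A_{1:2}^{[k]}, \mathsf{F})$, while conditional on $W_1$ the two systems are independent.

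\textbf{Step 2: inherited identities and bounds.} On the enlarged space I would collect the ingredients: storage bounds $\alpha_n L\log_2|\mathcal{X}| \geq H(S_n) = H(\tilde{S}_n)$; download bounds $\beta_n L\log_2|\mathcal{X}| \geq H(A_n^{[k]}|\mathsf{F}) = H(\tilde{A}_n^{[k]}|\mathsf{F})$; correctness $H(W_2|A_{1:2}^{[2]},\mathsf{F}) = 0$ and $H(\tilde{W}_2|\tilde{A}_{1:2}^{[2]},\mathsf{F}) = 0$; and the privacy-based substitutability of message indices inside conditional entropies, exactly as in the $(*)$ step of Theorem~\ref{thm:cutset}, now applied on the pseudo side as well.

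\textbf{Step 3: the pseudo-message inequality via submodularity.} The core step is to bound a carefully chosen joint entropy such as $H(S_{1:2}, \tilde{S}_{1:2}, A_{1:2}^{[2]}, \tilde{A}_{1:2}^{[2]} \mid \mathsf{F})$ two ways. From above it is dominated by sums of $H(S_n)$ and $H(A_n^{[\cdot]}|\mathsf{F})$, contributing $\alpha$- and $\beta$-terms; from below, the two correctness constraints together with the conditional independence of the two copies given $W_1$ allow us to recover both $W_2$ and $\tilde{W}_2$, injecting $2L\log_2|\mathcal{X}|$ of entropy beyond the $H(W_1) = L\log_2|\mathcal{X}|$ already paid for. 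The submodular inequality $H(X,Y) + H(X,Z) \geq H(X,Y,Z) + H(X)$ applied iteratively, combined with privacy swaps between $k=1$ and $k=2$ answers on both the original and pseudo sides, should tighten these sums sufficiently. Symmetrizing over $n=1,2$ turns $\alpha_1+\alpha_2$ and $\beta_1+\beta_2$ into $2\alpha$ and $2\beta$, and combining the resulting inequality with the baseline $\alpha + \beta \geq 2$ of Theorem~\ref{thm:cutset} with appropriate positive weights should produce precisely $3\alpha + 8\beta \geq 10$.

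\textbf{Main obstacle.} The hard part is Step 3: pinning down the exact sequence of submodular expansions, privacy-driven index swaps, and conditionings on $W_1$ that turns the two-copy structure into an entropy inequality strong enough to beat $\alpha + \beta \geq 2$ at the capacity point. The Ozarow--Zhang--Yeung paradigm guarantees that some such combination exists once the Markov coupling is in place, but choosing the combination so the arithmetic lands on coefficients $3$ and $8$ (rather than some weaker affine bound) is the delicate and largely ad hoc part of the argument.
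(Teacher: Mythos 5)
Your high-level framing (enlarge the space by Markov coupling, then exploit Shannon-type inequalities on the enlarged space) is the right philosophy, but your specific coupling differs from the paper's, and your Step~3 is a hope rather than a proof, so the argument as written has a genuine gap.

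On the coupling: you introduce a ``tilde'' system $(\tilde W_2,\tilde S_{1:2},\tilde A_{1:2}^{[k]})$ by replaying the same encoding functions $\Phi_n,\varphi_n$ with the same $W_1$ and the same $\mathsf F$ but with an independent copy of $W_2$. The paper does something structurally different in two respects. First, it works entirely with \emph{fixed-query} answers: after invoking the symmetry reductions of \cite{tian2018capacity}, it fixes specific queries and defines five deterministic functions $X_1,X_2,X_3$ of $S_1$ and $Y_1,Y_2$ of $S_2$ satisfying the decoding identities $H(W_1\mid X_1,Y_1)=H(W_2\mid X_1,Y_2)=H(W_2\mid X_2,Y_1)=H(W_1\mid X_3,Y_2)=0$. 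The random key $\mathsf F$ is explicitly absent from this part of the argument (the paper notes this), whereas your construction keeps $\mathsf F$ as the glue between the two systems. Second, the paper's pseudo messages $(V_1,V_2)$ and $(U_1,U_2)$ are \emph{mirrors of the messages}, not a regenerated storage system: they are coupled through the Markov chains $(V_1,V_2)\leftrightarrow(Y_1,Y_2)\leftrightarrow(W_1,W_2,X_{1:3})$ and $(U_1,U_2)\leftrightarrow(X_{1:3})\leftrightarrow(W_1,W_2,Y_{1:2},V_{1:2})$ with marginal identities $(Y_{1:2},V_{1:2})\sim(Y_{1:2},W_{1:2})$ and $(X_{1:3},U_{1:2})\sim(X_{1:3},W_{1:2})$. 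These mirrors need not be realizable by re-running the encoder (unlike your $\tilde S_n=\Phi_n(W_1,\tilde W_2)$), and the two-sided structure (one mirror hung off $Y$, the other off $X$) is precisely what the final cancellation hinges on. Your coupling shares $W_1$ between the two copies; the paper's mirrors have nothing directly analogous. Whether your coupling can be made to yield $3\alpha+8\beta\geq 10$ is an open question, but it is not obviously equivalent.

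The more serious issue is that Step~3 does not actually derive the inequality. You identify a candidate quantity $H(S_{1:2},\tilde S_{1:2},A_{1:2}^{[2]},\tilde A_{1:2}^{[2]}\mid\mathsf F)$ and gesture at upper/lower bounds via submodularity and privacy swaps, but you never produce the chain that lands on the coefficients $3$ and $8$; indeed you explicitly flag this as ``the delicate and largely ad hoc part.'' The paper's proof of Theorem~\ref{theorem:NK22} is precisely that ad hoc chain: it starts from $6\alpha+16\beta$ paired against $3H(X_{1:3})+3H(Y_{1:2})+8H(X_1,Y_2)$, rewrites $H(Y_{1:2})$ and $H(X_{1:3})$ through the mirrors to replace them by $2L\log_2|\mathcal X|$ minus conditional entropies of the pseudo messages, expands and recombines through a sequence of submodularity and decoding steps, and closes by cancelling $H(V_2,U_2)$ against $H(V_2)+H(U_2|V_2)$. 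None of that computation appears in your proposal, and the claim that ``some such combination exists once the Markov coupling is in place'' is not a proof --- the existence of a useful coupling does not by itself certify any particular linear inequality. As it stands, what you have is a plausible research direction, not a proof of the theorem.
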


\begin{corollary}
For $N=K=2$, any capacity achieving codes, i.e., codes with $\beta=0.75$, must satisfy $\alpha\geq 4/3$. 
\end{corollary}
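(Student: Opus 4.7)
The plan is extremely short: the corollary is a direct arithmetic consequence of Theorem \ref{theorem:NK22}, so I would invoke that inequality as a black box and simply substitute the capacity-achieving download cost. No new information-theoretic argument is needed, since all the work has already been done in proving $3\alpha+8\beta\geq 10$.

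First, I would recall from the preceding exposition (the Sun--Jafar capacity formula quoted in the excerpt as $\min\beta = \frac{N^K-1}{N^K(N-1)}$) that for $N=K=2$ a capacity-achieving PIR code is precisely one whose download cost attains
\[
\beta \;=\; \frac{2^2-1}{2^2\,(2-1)} \;=\; \frac{3}{4}.
\]
Second, I would plug this value into the bound supplied by Theorem \ref{theorem:NK22}:
\[
3\alpha + 8\beta \;\geq\; 10 \quad\Longrightarrow\quad 3\alpha \;\geq\; 10 - 8\cdot\tfrac{3}{4} \;=\; 10-6 \;=\; 4,
\]
and divide through by $3$ to conclude $\alpha\geq 4/3$, as claimed.

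There is essentially no obstacle, since Theorem \ref{theorem:NK22} carries all of the technical load (the pseudo-message construction, the Ozarow-style Markov coupling, etc.). The only points to flag for the reader are that (i) the definition of ``capacity-achieving'' in this two-message, two-database setting really is $\beta=3/4$, and (ii) one should note the mild improvement this gives over the trivial bound $\alpha\geq 1$ obtainable from the fact that the two messages must be reconstructible from the stored contents alone, indicating that the pseudo-message outer bound is strictly stronger than Theorem \ref{thm:cutset} at the capacity point.
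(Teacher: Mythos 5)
Your proof is correct and is exactly the intended argument: the corollary follows by substituting $\beta = \tfrac{3}{4}$ into the inequality $3\alpha + 8\beta \geq 10$ from Theorem~\ref{theorem:NK22} and solving for $\alpha$. The paper leaves this arithmetic implicit, and your write-up supplies precisely that routine step.
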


This bound is illustrated in Fig. \ref{fig:NK22}, together with the best known achievable tradeoff discovered in \cite{Tian_Sun_Chen_Storage}. It can be seen that this new bound can indeed improve the accuracy of the approximation for the extreme point on the horizontal axis. The proof of this bound is rather technical, the details of which are relegated to Appendix \ref{appendix:C}, but the main proof idea is given and discussed in the remainder of this section. The new bound in Theorem \ref{theorem:NK22} appears difficult to generalize to larger values of $N$ and $K$. Nevertheless, it can be viewed as a strong piece of evidence that the outer bounds we have at this point are likely not tight, and more sophisticated techniques involving non-Shannon type inequalities may provide additional improvement. In fact, without using the pseudo-message technique, we have indeed applied the computational approach \cite{tian2014characterizing,tian2019open} on this two-message two-database problem, i.e., invoking all Shannon type inequalities, which did not produce any bound stronger than that in Theorem \ref{thm:cutset}.

\begin{figure}[tb]
\centering
\includegraphics[width=0.65\textwidth]{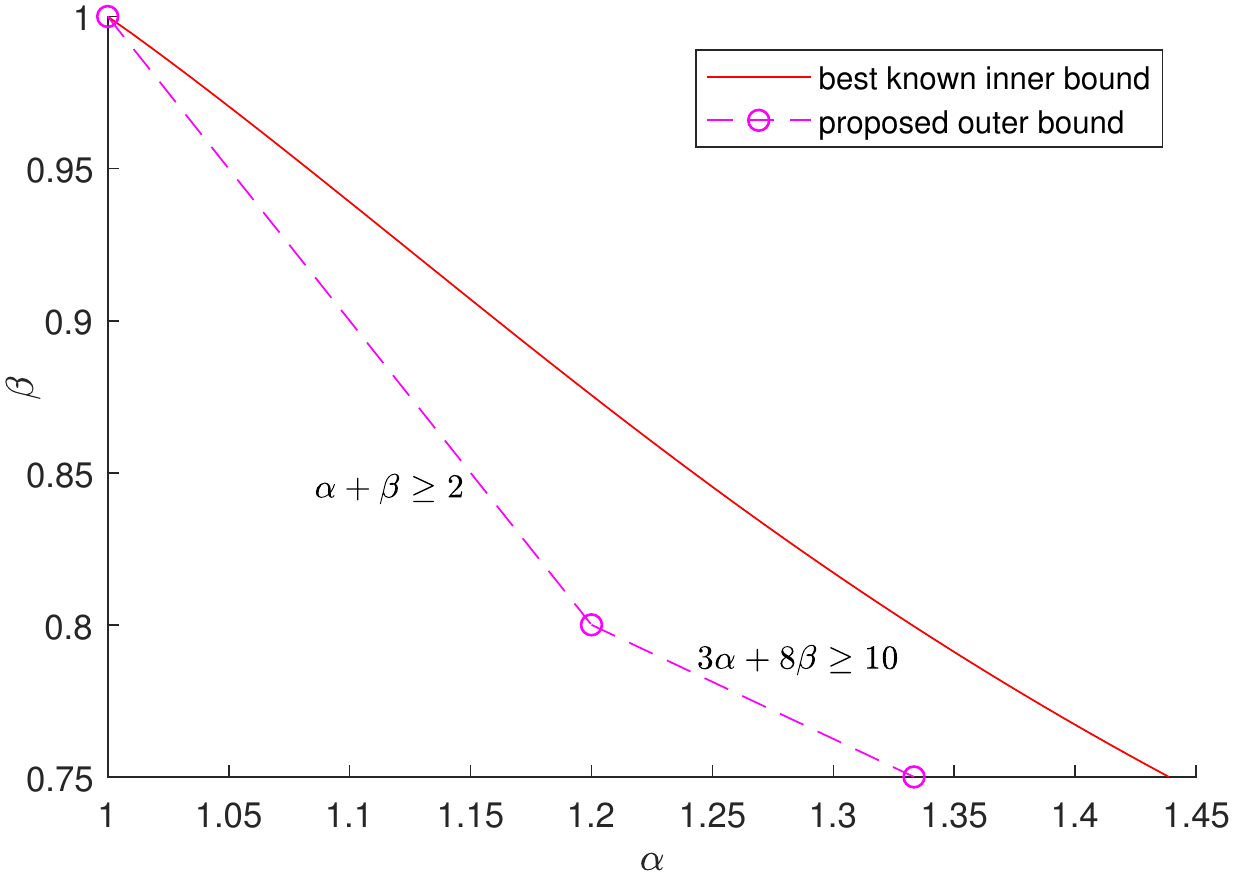}
\caption{The proposed outer bound and the best known inner bound when $N=K=2$.\label{fig:NK22}}
\end{figure}

In order to prove this bound, we need to first introduce some  necessary simplifications on the code structure used in the derivation. With such simplification, we shall discuss the main pseudo-message technique in some more details.

\subsection{Symmetry Assumptions}

It was shown in \cite{tian2018capacity} that there are three types of symmetry relations in this private information retrieval problem. By applying the three types of symmetry relations, it can be shown that it is without loss of optimality to consider only codes such that 
\begin{align}
&H(S_n)=H(S_{n'}),&\quad n,n\rq{}\in\{1,2,\ldots,N\}\notag\\
&H(A_n^{(q)})=H(A_{n\rq{}}^{(q\rq{})}),&\quad n,n\rq{}\in\{1,2,\ldots,N\},\quad q\in \mathcal{Q}_n,\quad q\rq{}\in\mathcal{Q}_{n\rq{}}\notag\\
&H(A_n^{(q)},W_k)=H(A_{n'}^{(q')},W_{k\rq{}}),&\quad n,n\rq{}\in\{1,2,\ldots,N\},\quad q\in \mathcal{Q}_n,\quad q\rq{}\in\mathcal{Q}_{n\rq{}}, \quad k,k\rq{}\in\{1,2,\ldots,K\}.\label{eqn:sym}
\end{align}
In other words, the databases use the same amount of storage, the answers all have the same entropy, and the combinations of any single answer and any single message all have the same joint entropy. Moreover, for such symmetrized codes, we also have 
\begin{align}
\beta=\beta_n, \quad n=1,2,\ldots,N.
\end{align}
As a consequence, we have 
\begin{align}
\beta\geq H(A_n^{(q)}), \quad q\in\mathcal{Q}_n,\quad n=1,2,\ldots,N.
\end{align}
The symmetry in terms of the joint entropy can be extended beyond that in (\ref{eqn:sym}), however in this work there is no need for such generality. The discussion in the following in effect utilizes the concept of answer variety introduced in \cite{tian2018capacity}, though we will not explicitly invoke the concept, and the discussion will be self-contained.

\subsection{A Subtle Dependence Structure}

Our first step is to consider the relation among different answers:
\begin{enumerate}
\item First consider an answer at database-$1$ for an arbitrary but fixed $q_1\in\mathcal{Q}_1$ which can be used to retrieve $W_1$, denoted as $X_1=A^{(q_1)}_1$. Clearly there exists a query $q_2\in \mathcal{Q}_2$ such that together with the answer $Y_1=A_2^{(q_2)}$, the message $W_1$ can recovered, i.e., $H(W_1|X_1,Y_1)=0$.
\item Because of the privacy constraint, the answer $X_1=A_1^{(q_1)}$ from database-1 can be used, together with an query $q_2\rq{}\in\mathcal{Q}_2$, to retrieve $W_2$, i.e., with the answer $Y_2=A^{(q_2\rq{})}_2$ we have $H(W_2|X_1,Y_2)=0$. Note that the answer $Y_1$ and $Y_2$ are not necessarily distinct.
\item Continuing the same argument, an answer $X_2=A_1^{(q_1\rq{})}$ must exist such that $H(W_2|X_2,Y_1)=0$.
\item Finally, an answer $X_3=A_1^{(q_1'')}$ must also exist such that $H(W_1|X_3,Y_2)=0$.
\end{enumerate}
Clearly, $(X_1,X_2,X_3,Y_1,Y_2)$ are functions of $W_1,W_2$, i.e., $H(X_1,X_2,X_3,Y_1,Y_2|W_1,W_2)=0$. 

Since the answers must come from the stored content, $H(X_1,X_2,X_3|S_1)=0$, we must have
\begin{align}
\alpha\geq H(S_1)\geq H(X_1,X_2,X_3),
\end{align}
and similarly
\begin{align}
\alpha\geq H(S_2)\geq H(Y_1,Y_2).
\end{align}
This is the dependence structure and constraints in the original problem setting that we shall avail. Note that in this line of proof, the random key $\mathsf{F}$ is not playing any significant role, unlike in the proofs for Theorem \ref{thm:cutset} and \ref{thm:extreme}. 
  
\subsection{A Pseudo Message Technique}

Our next step is to extend the random variable space, by introducing some random variables not in the original problem. 
The random variables $V_1,V_2$, referred to as the pseudo messages, are introduced into the setting using the so-called Markov coupling 
\begin{align}
(V_1,V_2)\leftrightarrow (Y_1,Y_2)\leftrightarrow (W_1,W_2,X_1,X_2,X_3). \label{eqn:markov1}
\end{align}
Moreover, the two sets of random variables have the identical marginal distribution
\begin{align}
(Y_1,Y_2,V_1,V_2)\sim (Y_1,Y_2,W_1,W_2).\label{eqn:mirror1}
\end{align}
Similarly, the second set of pseudo message random variables $(U_1,U_2)$ are also introduced such that 
\begin{align}
(U_1,U_2)\leftrightarrow (X_1,X_2,X_3)\leftrightarrow (W_1,W_2,Y_1,Y_2,V_1,V_2),\label{eqn:markov2}
\end{align}
and the two sets of random variables have the identical marginal distribution
\begin{align}
(X_1,X_2,X_3,U_1,U_2)\sim (X_1,X_2,X_3,W_1,W_2). \label{eqn:mirror2}
\end{align}
As a consequence, the extended set of random variables can be factorized as 
\begin{align}
P(W_1,W_2,X_1,X_2,X_3,Y_1,Y_2)P(U_1,U_2|X_1,X_2,X_3)P(V_1,V_2|Y_1,Y_2).
\end{align}

The proof of Theorem \ref{theorem:NK22} utilizes the symmetry, the Markov condition in the extended random variable space, as well as the encoding and decoding constraints. The basic idea is to bound or substitute the conditional entropy involving $(W_1,W_2,X_1,X_2,X_3,Y_1,Y_2)$ using conditional entropy involving subsets of $(U_1,U_2,V_1,V_2,W_1,W_2,X_1,X_2,X_3,Y_1,Y_2)$ in matching forms, and then to cancel terms using the identical distribution relations. 

The proof technique of introducing pseudo messages can be viewed as being closely related to non-Shannon type inequalities, since all known non-Shannon type inequalites are essentially produced by introducing certain mirrored copies. This proof was obtained with the assistance of the computational tool that the author developed previously \cite{tian2019open}, which was found valuable in the investigation of the regenerating code problem \cite{tian2014characterizing} and the coded caching problem \cite{tian2018symmetry}.  The main difference between the technique we use in this work and those seen in generating non-Shannon type inequalities is that instead of introducing a single-sided mirrored set, we introduce mirrors on two sides--one side being $(V_1,V_2)$, and the other being $(U_1,U_2)$--both of which through the Markov coupling.

\section{Conclusion}

We initiated the investigation of the fundamental tradeoff between the storage cost and the download cost in general private information retrieval systems. Several novel outer bounds are provided. On the one hand, we were able to confirm the folklore that when the messages are stored without any redundancy, the retrieval must download all the messages. On the other hand, when the code is PIR capacity-achieving, we establish the somewhat surprising result that the storage cost cannot be too much lower than storing all the messages at each database. Moreover, we show that for the two-message two-database case, a more elaborate pseudo-message technique can be used to derive a stronger outer bound. As an ongoing work, we are investigating more general outer bounds of the fundamental tradeoff between the storage cost and the download cost.

\section*{Acknowledgment}

The author wishes to thank Dr. Hua Sun for several discussions and for providing comments on an early draft of the paper. 

\appendix
\section{Proof of Lemmas  \ref{lemma:T} and  \ref{lemma:TR}}
\label{appendix:A}

\begin{proof}[Proof of Lemma \ref{lemma:T}]
We write the following chain of inequalities
\begin{align}
T^{k}&=H(A_{1:N}^{[k]}|W_{1:k},\mathsf{F})\notag\\
&\geq \frac{1}{N}\sum_{n=1}^N H(A_{n}^{[k]}|W_{1:k},\mathsf{F})\notag\\
&\stackrel{(*)}{=} \frac{1}{N}\sum_{n=1}^N H(A_{n}^{[k+1]}|W_{1:k},\mathsf{F})\notag\\
&\geq \frac{1}{N}H(A_{1:N}^{[k+1]}|W_{1:k},\mathsf{F})\notag\\
&= \frac{1}{N}H(A_{1:N}^{[k+1]},W_{k+1}|W_{1:k},\mathsf{F})\notag\\
&\geq \frac{L\log_2|\mathcal{X}|}{N}+\frac{1}{N}H(A_{1:N}^{[k+1]}|W_{1:k+1},\mathsf{F}). \label{eqn:capacity}
\end{align}
The inequality $(*)$ can be justified as follows
\begin{align}
H(A_{n}^{[k]}|W_{1:k},\mathsf{F})=H(A_{n}^{[k]}|W_{1:k},Q_n^{[k]})=H(A_{n}^{[k+1]}|W_{1:k},Q_n^{[k+1]})=H(A_{n}^{[k+1]}|W_{1:k},\mathsf{F}),
\end{align}
where the first equality is because of the Markov string $\mathsf{F}\leftrightarrow Q_{n}^{[k]}\leftrightarrow (W_{1:K},S_{1:N},A_n^{[k]})$ and the last equality because of $\mathsf{F}\leftrightarrow Q_{n}^{[k+1]}\leftrightarrow (W_{1:K},S_{1:N},A_n^{[k]+1})$, and the equality in the middle is due to the privacy relation, i.e.,
\begin{align}
(W_{1:k},A_{n}^{[k]},Q_n^{[k]})\sim (W_{1:k},A_{n}^{[k+1]},Q_n^{[k+1]}),
\end{align}
because of (\ref{eqn:identical}). 
This $(*)$ notation will also be used in the rest of the paper. The proof is now complete. 
\end{proof}

\begin{proof}[Proof of Lemma \ref{lemma:TR}]
We first notice that 
\begin{align}
V_n^{k}&=H(A_{1:n-1,n+1:N}^{[k]},S_{n}|W_{1:k},\mathsf{F})\notag\\
&=H(S_{n}|\mathsf{F},W_{1:k})+H(A_{1:n-1,n+1:N}^{[k]}|S_{n},W_{1:k},\mathsf{F})\notag\\
&\geq H(S_{n}|\mathsf{F},W_{1:k})+\frac{1}{N-1}\sum_{n'\neq n} H(A_{n'}^{[k]}|S_{n},W_{1:k},\mathsf{F})\\
&\stackrel{(*)}{=} H(S_{n}|\mathsf{F},W_{1:k})+\frac{1}{N-1}\sum_{n'\neq n} H(A_{n'}^{[k+1]}|S_{n},W_{1:k},\mathsf{F})\\
&\geq H(S_{n}|\mathsf{F},W_{1:k})+\frac{1}{N-1}  H(A_{1:n-1,n+1:N}^{[k+1]}|S_{n},W_{1:k},\mathsf{F})\\
&= H(S_{n}|F,W_{1:k})+\frac{1}{N-1}  H(A_{1:n-1,n+1:N}^{[k+1]},W_{k+1}|S_{n},W_{1:k},\mathsf{F})\\
&= \frac{N-2}{N-1}H(S_{n}|\mathsf{F},W_{1:k})+\frac{1}{N-1} H(A_{1:n-1,n+1:N}^{[k+1]},W_{k+1},S_{n}|W_{1:k+1},\mathsf{F})\notag\\
&= \frac{N-2}{N-1}H(S_{n}|\mathsf{F},W_{1:k})+\frac{L\log_2|\mathcal{X}|}{N-1}+\frac{1}{N-1} H(A_{1:n-1,n+1:N}^{[k+1]},S_{n}|W_{1:k+1},\mathsf{F})\notag\\
&=\frac{N-2}{N-1}H(S_{n}|\mathsf{F},W_{1:k})+\frac{L\log_2|\mathcal{X}|}{N-1}+\frac{1}{N-1} V_n^{k+1}.
\end{align}

With the inequality above, we can then write
\begin{align}
\frac{V_n^{k}}{N-2}+T^{k}&\geq H(A_{1:N}^{[k]}|W_{1:k},\mathsf{F})+\frac{1}{N-1}H(S_{n}|W_{1:k},\mathsf{F})+\frac{L\log_2|\mathcal{X}|+V_n^{k+1}}{(N-1)(N-2)}\notag\\
&=H(A_{1:N}^{[k]}|W_{1:k},\mathsf{F})+\frac{1}{N-1}H(S_{n},A_{n}^{[k]}|W_{1:k},\mathsf{F})+\frac{L\log_2|\mathcal{X}|+V_n^{k+1}}{(N-1)(N-2)}\notag\\
&=\frac{N-2}{N-1}H(A_{1:N}^{[k]}|W_{1:k},\mathsf{F})+\frac{2}{N-1}H(A_{n}^{[k]}|W_{1:k},\mathsf{F})\notag\\
&\qquad+\frac{1}{N-1}\bigg{(}H(A_{1:n-1,n+1:N}^{[k]}|A_{n}^{[k]},W_{1:k},\mathsf{F})+ H(S_{n}|A_{n}^{[k]},W_{1:k},\mathsf{F})\bigg{)}+\frac{L\log_2|\mathcal{X}|+V_n^{k+1}}{(N-1)(N-2)}\notag\\
&\geq \frac{N-2}{N-1}H(A_{1:N}^{[k]}|W_{1:k},\mathsf{F})+\frac{2}{N-1}H(A_{n}^{[k]}|W_{1:k},\mathsf{F})\notag\\
&\qquad+\frac{1}{N-1}H(A_{1:n-1,n+1:N}^{[k]},S_{n}|A_{n}^{[k]},W_{1:k},\mathsf{F})+\frac{L\log_2|\mathcal{X}|+V_n^{k+1}}{(N-1)(N-2)}\notag\\
&=\frac{N-2}{N-1}H(A_{1:N}^{[k]}|W_{1:k},\mathsf{F})+\frac{1}{N-1}H(A_{n}^{[k]}|W_{1:k},\mathsf{F})\notag\\
&\qquad+\frac{1}{N-1}H(S_{n}|W_{1:k},\mathsf{F})+\frac{1}{N-1}H(A_{1:n-1,n+1:N}^{[k]}|S_{n},W_{1:k},\mathsf{F})+\frac{L\log_2|\mathcal{X}|+V_n^{k+1}}{(N-1)(N-2)}, \label{eqn:recur1}
\end{align}
where the first term and third term have a similar form as the first and the second term at the beginning of the chain, only with slightly different coefficients. Continuing the same manipulation as in (\ref{eqn:recur1}), we arrive at
\begin{align}
&\frac{V_n^{k}}{N-2}+T^{k}\notag\\
&\geq H(A_{n}^{[k]}|W_{1:k},\mathsf{F})+H(A_{1:n-1,n+1:N}^{[k]}|S_{n},W_{1:k},\mathsf{F})+\frac{1}{N-1}H(S_{n}|W_{1:k},\mathsf{F})+\frac{L\log_2|\mathcal{X}|+V_n^{k+1}}{(N-1)(N-2)}\notag\\
&\geq H(A_{n}^{[k]}|W_{1:k},\mathsf{F})+\frac{1}{N-1}\sum_{n'\neq n}^{N-1} H(A_{n'}^{[k]}|S_{n},W_{1:k},\mathsf{F})+\frac{1}{N-1}H(S_{n}|W_{1:k},\mathsf{F})+\frac{L\log_2|\mathcal{X}|+V_n^{k+1}}{(N-1)(N-2)}\notag\\
&\stackrel{(*)}{=}H(A_{n}^{[k+1]}|W_{1:k},\mathsf{F})+\frac{1}{N-1}\sum_{n'\neq n}^{N-1} H(A_{n'}^{[k+1]}|S_{n},W_{1:k},\mathsf{F})+\frac{1}{N-1}H(S_{n}|W_{1:k},\mathsf{F})+\frac{L\log_2|\mathcal{X}|+V_n^{k+1}}{(N-1)(N-2)}\notag\\
&\geq H(A_{n}^{[k+1]}|W_{1:k},\mathsf{F})+\frac{1}{N-1} H(A_{1:n-1,n+1:N}^{[k+1]},S_{n}|W_{1:k},\mathsf{F})+\frac{L\log_2|\mathcal{X}|+V_n^{k+1}}{(N-1)(N-2)}\notag\\
&= H(A_{n}^{[k+1]}|W_{1:k},\mathsf{F})+\frac{1}{N-1} H(A_{1:n-1,n+1:N}^{[k+1]},S_{n},W_{k+1}|W_{1:k},\mathsf{F})+\frac{L\log_2|\mathcal{X}|+V_n^{k+1}}{(N-1)(N-2)}\notag\\
&\geq H(A_{n}^{[k+1]}|W_{1:k},\mathsf{F})+\frac{L\log_2|\mathcal{X}|}{N-2}+\frac{V_n^{k+1}}{N-2}.\label{eqn:beforesumN}
\end{align}

Now summing  (\ref{eqn:beforesumN}) over $n=1,2,\ldots,N$ and then taking the average, we arrive at
\begin{align}
\frac{V^{k}}{N-2}+T^{k} &\geq \frac{1}{N}\sum_{n=1}^{N}H(A_{n}^{[k+1]}|W_{1:k},\mathsf{F})+\frac{L\log_2|\mathcal{X}|}{N-2}+\frac{V^{k+1}}{N-2}\notag\\
&\geq \frac{1}{N}H(A_{1:N}^{[k+1]}|W_{1:k},\mathsf{F})+\frac{L\log_2|\mathcal{X}|}{N-2}+\frac{V^{k+1}}{N-2}\notag\\
&=\frac{L\log_2|\mathcal{X}|}{N}+\frac{1}{N}T^{k+1}+\frac{L\log_2|\mathcal{X}|}{N-2}+\frac{V^{k+1}}{N-2}.
\end{align}
The proof is now complete. 
\end{proof}

\section{Proof of Theorem \ref{thm:extreme}}
\label{appendix:B}
\begin{proof}[Proof of Theorem \ref{thm:extreme}]
We first prove the following bound by induction
\begin{align}
&\frac{V^{k}}{N-2}+N^{K-k-1}T^{k} \geq \frac{N^{K-k}-1}{N(N-1)}L\log_2|\mathcal{X}|+\frac{(K-k)L\log_2|\mathcal{X}|}{N-2},\qquad k=1,2,\ldots,K-1. \label{eqn:ind2}
\end{align}
For this purpose, first consider $k=K-1$, which is simply Lemma \ref{lemma:TR} since $T^{K}=V^K=0$. Now assume the claim is true for $k=k^*$, and we next show that it is true for $k=k^*-1$. For this purpose we write
\begin{align}
\frac{V^{k^*-1}}{N-2}+N^{K-k^*}T^{k^*-1}&=\frac{V^{k^*-1}}{N-2}+T^{k^*-1}+(N^{K-k^*}-1)T^{k^*-1}\notag\\
&\geq \left(\frac{1}{N-2}+\frac{1}{N}\right)L\log_2|\mathcal{X}|+\frac{V^{k^*}}{N-2}+\frac{T^{k^*}}{N}+(N^{K-k^*}-1)\left(\frac{L\log_2|\mathcal{X}|}{N}+\frac{1}{N}T^{k^*}\right)\notag\\
&= \frac{V^{k^*}}{N-2}+N^{K-k^*-1}T^{k^*}+N^{K-k^*-1}L\log_2|\mathcal{X}|+\frac{L\log_2|\mathcal{X}|}{N-2},
\end{align}
where the inequality is by applying Lemma \ref{lemma:T} and Lemma \ref{lemma:TR}. By the assumption that (\ref{eqn:ind2}) holds for $k=k^*$, we have
\begin{align}
\frac{V^{k^*-1}}{N-2}+N^{K-k^*}T^{k^*-1}&\geq \frac{N^{K-k^*}-1}{N(N-1)}L\log_2|\mathcal{X}|+\frac{(K-k^*)L\log_2|\mathcal{X}|}{N-2}+N^{K-k^*-1}L\log_2|\mathcal{X}|+\frac{L\log_2|\mathcal{X}|}{N-2}\notag\\
&=\frac{N^{K-k^*+1}-1}{N(N-1)}L\log_2|\mathcal{X}|+\frac{(K-k^*+1)L\log_2|\mathcal{X}|}{N-2}, 
\end{align}
which is the desired inequality for $k=k^*-1$. 

The bound stated in the theorem can now be obtained by taking $k=1$ in (\ref{eqn:ind2})
\begin{align}
\frac{V^{1}}{N-2}+N^{K-2}T^{1} \geq \frac{N^{K-1}-1}{N(N-1)}L\log_2|\mathcal{X}|+\frac{(K-1)L\log_2|\mathcal{X}|}{N-2}\label{eqn:TV12TVK}
\end{align}
and noticing that by the definition of $\alpha$ and $\beta$
\begin{align}
&\frac{\alpha+(N-1) \beta}{N-2}L\log_2|\mathcal{X}|+ N^{K-1}\beta L\log_2|\mathcal{X}|\notag\\
&\geq\frac{\sum_{n=1}^NH(A^{[1]}_{1:n-1,n+1:N},S_{n}|\mathsf{F})}{N(N-2)}+N^{K-2}H(A^{[1]}_{1:N}|\mathsf{F})\notag\\
&=\frac{\sum_{n=1}^NH(A^{[1]}_{1:n-1,n+1:N},S_{n},W_1|\mathsf{F})}{N(N-2)}+N^{K-2}H(A^{[1]}_{1:N},W_1|\mathsf{F})\notag\\
&=\frac{L\log_2|\mathcal{X}|}{N-2}+N^{K-2}L\log_2|\mathcal{X}|+\frac{V^{1}}{N-2}+N^{K-2}T^{1}\notag\\
&\geq \frac{L\log_2|\mathcal{X}|}{N-2}+N^{K-2}L\log_2|\mathcal{X}|+\frac{N^{K-1}-1}{N(N-1)}L\log_2|\mathcal{X}|+\frac{(K-1)L\log_2|\mathcal{X}|}{N-2}\notag\\
&=\frac{KL\log_2|\mathcal{X}|}{N-2}+\frac{N^{K}-1}{N(N-1)}L\log_2|\mathcal{X}|,
\end{align}
where the last inequality is due to (\ref{eqn:TV12TVK}). 
Dividing both sides by $L\log_2|\mathcal{X}|$ completes the proof. 
\end{proof}

\section{Proof of Theorem \ref{theorem:NK22}}
\label{appendix:C}
\begin{proof}[Proof of \ref{theorem:NK22}]
We start by
\begin{align}
6\alpha L\log_2|\mathcal{X}|+16\beta L\log_2|\mathcal{X}|&\geq 3H(S_1)+3H(S_2)+8H(X_1)+8H(Y_2)\notag\\
&\geq 3H(X_1,X_2,X_3)+3H(Y_1,Y_2)+8H(X_1,Y_2).
\end{align}
Note that
\begin{align}
H(Y_1,Y_2)&=H(Y_1,Y_2,V_1,V_2)-H(V_1,V_2|Y_1,Y_2)\nonumber\\
&\stackrel{(m)}{=}H(Y_1,Y_2,V_1,V_2)-H(V_1,V_2|W_1,W_2,Y_1,Y_2)\nonumber\\
&\stackrel{(f)}{=}H(Y_1,Y_2,V_1,V_2)-H(V_1,V_2|W_1,W_2)\nonumber\\
&\stackrel{(i)}{=}H(Y_1,Y_2,W_1,W_2)-H(V_1,V_2|W_1,W_2)\nonumber\\
&\stackrel{(f)}{=}H(W_1,W_2)-H(V_1,V_2|W_1,W_2)\nonumber\\
&=2L\log_2|\mathcal{X}|-H(V_1,V_2|W_1,W_2),
\end{align}
where $(m)$ means by the Markov string relation (\ref{eqn:markov1}), $(f)$ means because of the coding function relation $H(Y_1,Y_2|W_1,W_2)=0$, and $(i)$ means the identical (i.e., mirrored) distribution (\ref{eqn:mirror1}). We will also use $(m)$, $(f)$, and $(i)$ in the sequel to indicate the justifications for the same (or similar) reasons. 
Similarly we can write
\begin{align}
H(X_1,X_2,X_3)&=H(X_1,X_2,X_3,U_1,U_2)-H(U_1,U_2|X_1,X_2,X_3)\nonumber\\
&\stackrel{(m)}{=}H(X_1,X_2,X_3,U_1,U_2)-H(U_1,U_2|X_1,X_2,X_3,W_1,W_2,V_1,V_2)\nonumber\\
&\stackrel{(f)}{=}H(X_1,X_2,X_3,U_1,U_2)-H(U_1,U_2|W_1,W_2,V_1,V_2)\nonumber\\
&\stackrel{(i)}{=}H(X_1,X_2,X_3,W_1,W_2)-H(U_1,U_2|W_1,W_2,V_1,V_2)\nonumber\\
&=H(W_1,W_2)-H(U_1,U_2|W_1,W_2,V_1,V_2)\nonumber\\
&=2L\log_2|\mathcal{X}|-H(U_1,U_2|W_1,W_2,V_1,V_2).
\end{align}
It follows that
\begin{align}
&6\alpha L\log_2|\mathcal{X}|+16\beta L\log_2|\mathcal{X}|\notag\\
&\geq 3H(X_1,X_2,X_3)+3H(Y_1,Y_2)+8H(X_1,Y_2)\nonumber\\
&\geq 12L\log_2|\mathcal{X}|-3H(V_1,V_2|W_1,W_2)-3H(U_1,U_2|W_1,W_2,V_1,V_2)+8H(X_1,Y_2)\nonumber\\
&\stackrel{(f)}{=}18L\log_2|\mathcal{X}|-3H(U_1,U_2,W_1,W_2,V_1,V_2)+8H(X_1,Y_2,W_2).
\end{align}

Now we wish to upper bound the second term
\begin{align}
&H(U_1,U_2,W_1,W_2,V_1,V_2)\notag\\
&=H(W_1,W_2,V_1,V_2,U_2)+H(U_1|W_1,W_2,V_1,V_2,U_2)\nonumber\\
&\stackrel{(f)}{=}H(W_1,W_2,V_1,V_2,U_2)+H(U_1,X_2|X_1,X_3, W_1,W_2,V_1,V_2,U_2)\nonumber\\
&\leq H(W_1,W_2,V_1,V_2,U_2)+H(U_1,X_2|X_1,X_3,U_2)\nonumber\\
&= H(W_1,W_2,V_2,U_2)+H(V_1|W_1,W_2,V_2,U_2)+H(U_1,X_2|X_1,X_3,U_2)\nonumber\\
&\leq H(W_1,W_2,V_2,U_2)+H(V_1|Y_1,Y_2,V_2)+H(U_1,X_2|X_1,X_3,U_2)\nonumber\\
&\leq H(W_1,W_2,V_2,U_2)+H(V_1,V_2,Y_1,Y_2)-H(Y_1,Y_2,V_2)+H(U_1,U_2,X_1,X_2,X_3)-H(X_1,X_3,U_2)\nonumber\\
&\stackrel{(i,f)}{=} H(W_1,W_2,V_2,U_2)+H(W_1,W_2)-H(Y_1,Y_2,V_2)+H(W_1,W_2)-H(X_1,X_3,U_2)\nonumber\\
&=H(W_1,W_2,V_2,U_2)+4L\log_2|\mathcal{X}|-H(Y_1,Y_2,V_2)-H(X_1,X_3,U_2). 
\end{align}
Thus we have
\begin{align}
&6\alpha L\log_2|\mathcal{X}|+16\beta L\log_2|\mathcal{X}|\notag\\
&\qquad\geq 6L\log_2|\mathcal{X}|-3H(W_1,W_2,V_2,U_2)+3H(Y_1,Y_2,V_2)+3H(X_1,X_3,U_2)+8H(X_1,Y_2,W_2). \label{eqn:separate}
\end{align}

We bound the last three terms as
\begin{align}
&3H(Y_1,Y_2,V_2)+3H(X_1,X_3,U_2)+8H(X_1,Y_2,W_2)\nonumber\\
&\geq  3[H(Y_1,Y_2,V_2)+H(X_1,Y_2,W_2)]+3[H(X_1,X_3,U_2)+H(X_1,Y_2,W_2)]+2H(X_1,W_2)\nonumber\\
&\stackrel{(i)}{=}  3[H(Y_1,Y_2,W_2)+H(X_1,Y_2,W_2)]+3[H(X_1,X_3,W_2)+H(X_1,Y_2,W_2)]+2H(X_1,W_2)\nonumber\\
&= 3[2H(Y_2,W_2)+H(Y_1|Y_2,W_2)+H(X_1|Y_2,W_2)]\notag\\
&\qquad+3[2H(X_1,W_2)+H(X_3|X_1,W_2)+H(Y_2|X_1,,W_2)]+2H(X_1,W_2)\nonumber\\
&\geq  3[2H(Y_2,W_2)+H(Y_1,X_1|Y_2,W_2)]+3[2H(X_1,W_2)+H(X_3,Y_2|X_1,W_2)]+2H(X_1,W_2)\nonumber\\
&=3[H(Y_2,W_2)+H(Y_1,X_1,Y_2,W_2)]+3[H(X_1,W_2)+H(X_3,Y_2,X_1,W_2)]+2H(X_1,W_2)\nonumber\\
&\stackrel{(f)}{=}3[H(Y_2,W_2)+H(Y_1,X_1,Y_2,W_1,W_2)]+3[H(X_1,W_2)+H(X_3,Y_2,X_1,W_1,W_2)]+2H(X_1,W_2)\nonumber\\
&=  12L\log_2|\mathcal{X}|+3H(Y_2,W_2)+3H(X_1,W_2)+2H(X_1,W_2)\notag\\
&\stackrel{(s)}{=}12L\log_2|\mathcal{X}|+8H(X_1,W_2),
\end{align} 
where $(s)$ is due to the symmetry relation (\ref{eqn:sym}); we will continue to use $(s)$ to indicate the same justification. 
The second term in (\ref{eqn:separate}) needs to be upper-bounded, which is given as
\begin{align}
&3H(W_1,W_2,V_2,U_2)\nonumber\\
&\stackrel{(f)}{=}[H(Y_2,W_1,V_2,U_2)+H(X_1|Y_2,W_1,V_2,U_2)]+[H(Y_1,W_2,V_2,U_2)+H(X_1|Y_1,W_2,V_2,U_2)]\nonumber\\
&\qquad+[H(W_1,V_2,U_2)+H(W_2|W_1,V_2,U_2)]\nonumber\\
&\leq [H(Y_2,W_1,V_2,U_2)+H(X_1|W_1,V_2,U_2)]+[H(Y_1,W_2,V_2,U_2)+H(X_1|W_2,V_2,U_2)]\nonumber\\
&\qquad+[H(W_1,V_2,U_2)+H(W_2|V_2,U_2)]\nonumber\\
&=H(Y_2,W_1,V_2,U_2)+H(X_1,W_1,V_2,U_2)+H(Y_1,W_2,V_2,U_2)+H(X_1,W_2,V_2,U_2)-H(V_2,U_2)\nonumber\\
&\leq H(Y_2,X_3,W_1,V_2,U_2)+H(X_1,Y_1,W_1,V_2,U_2)+H(X_2,Y_1,W_2,V_2,U_2)\nonumber\\
&\qquad+H(X_1,Y_2,W_2,V_2,U_2)-H(V_2,U_2).
\end{align}

Thus we have
\begin{align}
&6\alpha L\log_2|\mathcal{X}|+16\beta L\log_2|\mathcal{X}|\notag\\
&\geq 18L\log_2|\mathcal{X}|+8H(X_1,W_2)+H(V_2,U_2)-[H(Y_2,X_3,W_1,V_2,U_2)+H(X_1,Y_1,W_1,V_2,U_2)\nonumber\\
&\qquad\qquad+H(X_2,Y_1,W_2,V_2,U_2)+H(X_1,Y_2,W_2,V_2,U_2)]\label{eqn:4ways}
\end{align}

Notice 
\begin{align}
&H(Y_2,X_3,W_1,V_2,U_2)-H(X_1,W_2)\notag\\
&\stackrel{(s)}{=}H(Y_2,X_3,W_1,V_2,U_2)-H(X_3,W_2)\notag\\
&\stackrel{(i)}{=}H(Y_2,X_3,W_1,V_2,U_2)-H(X_3,U_2)\notag\\
&=H(Y_2,V_2|X_3,U_2)\leq H(Y_2,V_2|U_2).
\end{align}

We can similarly bound the other terms in (\ref{eqn:4ways}), and arrive at
\begin{align}
&6\alpha L\log_2|\mathcal{X}|+16\beta L\log_2|\mathcal{X}|\nonumber\\
&\geq 18 L\log_2|\mathcal{X}|+4H(X_1,W_2)+H(V_2,U_2)-H(Y_2,V_2|U_2)-H(Y_1,V_2|U_2)-H(Y_1,V_2|U_2)-H(Y_2,V_2|U_2)\nonumber\\
&\geq 22 L\log_2|\mathcal{X}|+4H(X_1,W_2)+H(V_2,U_2)-2H(Y_2,V_2,U_2)-2H(Y_1,V_2,U_2)\nonumber\\
&\stackrel{(s)}{=} 22 L\log_2|\mathcal{X}|+2H(Y_1,V_2)+2H(Y_2,V_2)+H(V_2,U_2)-2H(Y_2,V_2,U_2)-2H(Y_1,V_2,U_2)\nonumber\\
&= 22 L\log_2|\mathcal{X}|+H(V_2,U_2)-2H(U_2|Y_2,V_2)-2H(U_2|Y_1,V_2)\nonumber\\
&\geq 22 L\log_2|\mathcal{X}|+H(V_2,U_2)-4H(U_2|V_2)\nonumber\\
&\geq 22 L\log_2|\mathcal{X}|+H(V_2,U_2)-H(U_2|V_2)-3H(U_2)\nonumber\\
&=22 L\log_2|\mathcal{X}|+H(V_2,U_2)-H(V_2)-H(U_2|V_2)-2H(U_2)\geq 20L\log_2|\mathcal{X}|,
\end{align}
where the second equality and the last equality are due to the identical distribution of $U_2$ and $V_2$, which are both identically distributed to $W_2$. 
Normalizing both sides gives the stated result. 
\end{proof}

\bibliographystyle{IEEEtran}

\begin{thebibliography}{10}
\providecommand{\url}[1]{#1}
\csname url@samestyle\endcsname
\providecommand{\newblock}{\relax}
\providecommand{\bibinfo}[2]{#2}
\providecommand{\BIBentrySTDinterwordspacing}{\spaceskip=0pt\relax}
\providecommand{\BIBentryALTinterwordstretchfactor}{4}
\providecommand{\BIBentryALTinterwordspacing}{\spaceskip=\fontdimen2\font plus
\BIBentryALTinterwordstretchfactor\fontdimen3\font minus
  \fontdimen4\font\relax}
\providecommand{\BIBforeignlanguage}[2]{{%
\expandafter\ifx\csname l@#1\endcsname\relax
\typeout{** WARNING: IEEEtran.bst: No hyphenation pattern has been}%
\typeout{** loaded for the language `#1'. Using the pattern for}%
\typeout{** the default language instead.}%
\else
\language=\csname l@#1\endcsname
\fi
#2}}
\providecommand{\BIBdecl}{\relax}
\BIBdecl

\bibitem{chor1995private}
B.~Chor, O.~Goldreich, E.~Kushilevitz, and M.~Sudan, ``Private information
  retrieval,'' in \emph{Foundations of Computer Science, 1995. Proceedings.,
  36th Annual Symposium on}, Oct. 1995, pp. 41--50.

\bibitem{goldreich2002lower}
O.~Goldreich, H.~Karloff, L.~J. Schulman, and L.~Trevisan, ``Lower bounds for
  linear locally decodable codes and private information retrieval,'' in
  \emph{Proceedings 17th IEEE Annual Conference on Computational Complexity},
  2002, pp. 175--183.

\bibitem{sun2018capacityldc}
H.~Sun and S.~A. Jafar, ``On the capacity of locally decodable codes,''
  \emph{arXiv preprint arXiv:1812.05566}, 2018.

\bibitem{sun2016blind}
------, ``Blind interference alignment for private information retrieval,'' in
  \emph{2016 IEEE International Symposium on Information Theory (ISIT)}, 2016,
  pp. 560--564.

\bibitem{sun2017PIRcapacity}
------, ``The capacity of private information retrieval,'' \emph{IEEE
  Transactions on Information Theory}, vol.~63, no.~7, pp. 4075--4088, Jul.
  2017.

\bibitem{Tajeddine_Rouayheb}
R.~Tajeddine, O.~W. Gnilke, and S.~El~Rouayheb, ``Private information retrieval
  from {MDS} coded data in distributed storage systems,'' \emph{IEEE
  Transactions on Information Theory}, vol.~64, no.~11, pp. 7081 -- 7093, 2018.

\bibitem{Sun_Jafar_TPIR}
H.~Sun and S.~A. Jafar, ``The capacity of robust private information retrieval
  with colluding databases,'' \emph{IEEE Transactions on Information Theory},
  vol.~64, no.~4, pp. 2361--2370, 2018.

\bibitem{sun2018capacitysymmetric}
------, ``The capacity of symmetric private information retrieval,'' \emph{IEEE
  Transactions on Information Theory}, vol.~65, no.~1, pp. 322--329, 2018.

\bibitem{Banawan_Ulukus}
K.~Banawan and S.~Ulukus, ``The capacity of private information retrieval from
  coded databases,'' \emph{IEEE Transactions on Information Theory}, vol.~64,
  no.~3, pp. 1945--1956, 2018.

\bibitem{banawan2018capacityByzantine}
------, ``The capacity of private information retrieval from {Byzantine} and
  colluding databases,'' \emph{IEEE Transactions on Information Theory},
  vol.~65, no.~2, pp. 1206--1219, 2018.

\bibitem{Sun_Jafar_MPIR}
H.~Sun and S.~A. Jafar, ``Multiround private information retrieval: Capacity
  and storage overhead,'' \emph{IEEE Transactions on Information Theory},
  vol.~64, no.~8, pp. 5743--5754, Aug. 2018.

\bibitem{Sun_Jafar_MDSTPIR}
------, ``Private information retrieval from {MDS} coded data with colluding
  servers: Settling a conjecture by {Freij-Hollanti et al.}'' \emph{IEEE
  Transactions on Information Theory}, vol.~64, no.~2, pp. 1000--1022, Feb.
  2018.

\bibitem{FREIJ_HOLLANTI}
R.~Freij-Hollanti, O.~W. Gnilke, C.~Hollanti, and D.~A. Karpuk, ``Private
  information retrieval from coded databases with colluding servers,''
  \emph{SIAM Journal on Applied Algebra and Geometry}, vol.~1, no.~1, pp.
  647--664, Nov. 2017.

\bibitem{kumar2019achieving}
S.~Kumar, H.-Y. Lin, E.~Rosnes, and A.~G. i~Amat, ``Achieving maximum distance
  separable private information retrieval capacity with linear codes,''
  \emph{IEEE Transactions on Information Theory}, vol.~65, no.~7, pp.
  4243--4273, Jul. 2019.

\bibitem{banawan2018capacity}
K.~Banawan and S.~Ulukus, ``The capacity of private information retrieval from
  coded databases,'' \emph{IEEE Transactions on Information Theory}, vol.~64,
  no.~3, pp. 1945--1956, Mar. 2018.

\bibitem{Banawan_Ulukus_Multimessage}
------, ``Multi-message private information retrieval: Capacity results and
  near-optimal schemes,'' \emph{IEEE Transactions on Information Theory},
  vol.~64, no.~10, pp. 6842--6862, Oct. 2018.

\bibitem{chen2017capacity}
Z.~Chen, Z.~Wang, and S.~Jafar, ``The capacity of private information retrieval
  with private side information,'' \emph{arXiv preprint arXiv:1709.03022},
  2017.

\bibitem{Wei_Banawan_Ulukus_Side}
Y.-P. Wei, K.~Banawan, and S.~Ulukus, ``The capacity of private information
  retrieval with partially known private side information,'' \emph{arXiv
  preprint arXiv:1710.00809}, 2017.

\bibitem{tian2018capacity}
C.~Tian, H.~Sun, and J.~Chen, ``Capacity-achieving private information
  retrieval codes with optimal message size and upload cost,'' \emph{arXiv
  preprint arXiv:1808.07536}, 2018.

\bibitem{zhou2019capacity}
R.~Zhou, C.~Tian, H.~Sun, and T.~Liu, ``Capacity-achieving private information
  retrieval codes from {MDS-coded} databases with minimum message size,''
  \emph{arXiv preprint arXiv:1903.08229}, 2019.

\bibitem{attia2018capacity}
M.~A. Attia, D.~Kumar, and R.~Tandon, ``The capacity of private information
  retrieval from uncoded storage constrained databases,'' \emph{arXiv preprint
  arXiv:1805.04104}, 2018.

\bibitem{Tian_Sun_Chen_Storage}
C.~Tian, H.~Sun, and J.~Chen, ``A {Shannon}-theoretic approach to the
  storage-retrieval tradeoff in {PIR} systems,'' in \emph{2018 Proceedings of
  IEEE International Symposium on Information Theory (ISIT)}, Jun. 2018, pp.
  1904--1908.

\bibitem{Rao_Vardy}
S.~Rao and A.~Vardy, ``Lower bound on the redundancy of {PIR} codes,''
  \emph{arXiv preprint arXiv:1605.01869}, 2016.

\bibitem{Chan_Ho_Yamamoto}
T.~H. Chan, S.-W. Ho, and H.~Yamamoto, ``Private information retrieval for
  coded storage,'' \emph{Proceedings of 2015 IEEE International Symposium on
  Information Theory (ISIT)}, pp. 2842--2846, Jun. 2015.

\bibitem{sun2019breaking}
H.~Sun and C.~Tian, ``Breaking the {MDS-PIR} capacity barrier via joint storage
  coding,'' \emph{Information}, vol.~10, no.~9, p. 265, 2019.

\bibitem{fazeli2015codes}
A.~Fazeli, A.~Vardy, and E.~Yaakobi, ``Codes for distributed {PIR} with low
  storage overhead,'' in \emph{2015 IEEE International Symposium on Information
  Theory (ISIT)}, 2015, pp. 2852--2856.

\bibitem{BanawanITW19}
K.~Banawan, B.~Arasli, and S.~Ulukus, ``Improved storage for efficient private
  information retrieval,'' in \emph{2019 IEEE Information Theory Workshop
  (ITW)}, 2019.

\bibitem{Ozarow:80}
L.~Ozarow, ``On a source-coding problem with two channels and three
  receivers,'' \emph{Bell System Technical Journal}, vol.~59, pp. 1909--1921,
  Dec. 1980.

\bibitem{wagner2008improved}
A.~B. Wagner and V.~Anantharam, ``An improved outer bound for multiterminal
  source coding,'' \emph{IEEE Transactions on Information Theory}, vol.~54,
  no.~5, pp. 1919--1937, 2008.

\bibitem{fu2002rate}
F.-W. Fu and R.~W. Yeung, ``On the rate-distortion region for multiple
  descriptions,'' \emph{IEEE Transactions on Information Theory}, vol.~48,
  no.~7, pp. 2012--2021, 2002.

\bibitem{tian2009approximating}
C.~Tian, S.~Mohajer, and S.~N. Diggavi, ``Approximating the gaussian multiple
  description rate region under symmetric distortion constraints,'' \emph{IEEE
  Transactions on Information Theory}, vol.~55, no.~8, pp. 3869--3891, 2009.

\bibitem{zhang1997non}
Z.~Zhang and R.~W. Yeung, ``A non-shannon-type conditional inequality of
  information quantities,'' \emph{IEEE Transactions on Information Theory},
  vol.~43, no.~6, pp. 1982--1986, 1997.

\bibitem{dougherty2011non}
R.~Dougherty, C.~Freiling, and K.~Zeger, ``Non-shannon information inequalities
  in four random variables,'' \emph{arXiv preprint arXiv:1104.3602}, 2011.

\bibitem{gurpinar2019use}
E.~G{\"u}rp{\i}nar and A.~Romashchenko, ``How to use undiscovered information
  inequalities: Direct applications of the copy lemma,'' \emph{arXiv preprint
  arXiv:1901.07476}, 2019.

\bibitem{tian2014characterizing}
C.~Tian, ``Characterizing the rate region of the (4, 3, 3) exact-repair
  regenerating codes,'' \emph{IEEE Journal on Selected Areas in
  Communications}, vol.~32, no.~5, pp. 967--975, 2014.

\bibitem{tian2019open}
C.~Tian, J.~S. Plank, and B.~Hurst, ``An open-source toolbox for computer-aided
  investigation on the fundamental limits of information systems, version
  0.1,'' \emph{arXiv preprint arXiv:1910.08567}, 2019.

\bibitem{tian2018symmetry}
C.~Tian, ``Symmetry, outer bounds, and code constructions: A computer-aided
  investigation on the fundamental limits of caching,'' \emph{Entropy},
  vol.~20, no.~8, p. 603, 2018.

\end{thebibliography}
 \newcommand{\noop}[1]{}

\end{document}